 \patchcmd\Gread@eps{\@inputcheck#1 }{\@inputcheck"#1"\relax}{}{}
\let\MYcaption\@makecaption
\newtheorem{thm}{Theorem}
\newtheorem{definition}{Definition}
\let\@makecaption\MYcaption
\begin{document}

%\preprint{APS/123-QED}

\title{Multi-qubit time-varying quantum channels for NISQ-era superconducting quantum processors}

\author{Josu Etxezarreta Martinez} 
\email[Correspondence: ]{jetxezarreta@tecnun.es}% Your name
    \affiliation{Department of Basic Sciences, Tecnun - University of Navarra, 20018 San Sebastian, Spain}

\author{Patricio Fuentes $^{1,\dagger}$, Antonio deMarti iOlius} 
\email[Correspondence: ]{\{pfuentesu, ademartio\}@tecnun.es}% Your name
    \thanks{These two authors contributed equally to this work.}
    \affiliation{Department of Basic Sciences, Tecnun - University of Navarra, 20018 San Sebastian, Spain}

\author{Javier Garcia-Frias}
        \email[Correspondence: ]{jgf@udel.edu}% Your name
    \affiliation{Department of Electrical and Computer Engineering, University of Delaware, Newark, 19716 Delaware, USA}
    
\author{Javier Rodr\'iguez Fonollosa}
        \email[Correspondence: ]{javier.fonollosa@upc.edu}% Your name
    \affiliation{Department de Teoria del Senyal i Comunicacions, Universitat Politecnica de Catalunya, 08034 Barcelona, Spain}

\author{Pedro M. Crespo}
        \email[Correspondence: ]{pcrespo@tecnun.es}% Your name
    \affiliation{Department of Basic Sciences, Tecnun - University of Navarra, 20018 San Sebastian, Spain}

\date{\today}% It is always \today, today,
             %  but any date may be explicitly specified

\begin{abstract}
Recent experimental studies have shown that the relaxation time ($T_1$) and the dephasing time ($T_2$) of superconducting qubits fluctuate considerably over time. To appropriately consider this time-varying nature of the $T_1$ and $T_2$ parameters, a new class of quantum channels, known as Time-Varying Quantum Channels (TVQCs), has been proposed. In previous works, realizations of multi-qubit TVQCs have been assumed to be equal for all the qubits of an error correction block, implying that the random variables that describe the fluctuations of $T_1$ and $T_2$ are block-to-block uncorrelated, but qubit-wise perfectly correlated for the same block. Physically, the fluctuations of these decoherence parameters are explained by the incoherent coupling of the qubits with unstable near-resonant two-level-systems (TLS), which indicates that such variations may be local to each of the qubits of the system. This qubit-wise uncorrelated behaviour has been proven in experiments conducted on a two-qubit system constructed with superconducting technology. In this article, we perform a correlation analysis of the fluctuations of the relaxation times of multi-qubit quantum processors ibmq\_quito, ibmq\_belem, ibmq\_lima, ibmq\_santiago and ibmq\_bogota. Our results show that it is reasonable to assume that the fluctuations of the relaxation and dephasing times of superconducting qubits are local to each of the qubits of the system. Based on these results, we discuss the multi-qubit TVQCs when the fluctuations of the decoherence parameters for an error correction block are qubit-wise uncorrelated (as well as from block-to-block), a scenario we have named the Fast Time-Varying Quantum Channel (FTVQC). Furthermore, we lower bound the quantum capacity of general FTVQCs based on a quantity we refer to as the ergodic quantum capacity and we prove that they both coincide for the class of fast time-varying amplitude damping channels. Finally, we use numerical simulations to study the performance of quantum error correction codes (QECC) when they operate over FTVQCs. Our simulation outcomes show that QECC performance is worse over FTVQCs than over channels that are static in time, but this degradation is less severe than if the fluctuations are considered to be qubit-wise fully correlated.
\end{abstract}
\keywords{Quantum information theory, decoherence, time-varying channels, Quantum error correction, oquantum channel capacity}
%\keywords{Suggested keywords}%Use showkeys class option if keyword
                              %display desired
\maketitle

\section{Introduction}
Fault-tolerant quantum computers have the potential to revolutionize the fields of computing and industry as we know them \cite{preskill}. However, the societal upheaval heralded by quantum computers will only be facilitated if these machines are equipped with Quantum Error Correction (QEC) strategies, the primary ingredient to make these devices fault-tolerant. The purpose of QEC methods, commonly known as QECCs, is to identify and correct the errors that inherently corrupt quantum information. These errors are inevitable given that they arise naturally as a result of the interaction of quantum mechanical systems with their surrounding environment. The different physical mechanisms through which quantum information can be corrupted are commonly amalgamated under the term known as decoherence \cite{josuchannels}. Understanding the way decoherence corrupts quantum information and mathematically modeling such effects is of paramount importance to construct QECCs that can successfully correct the errors that occur in real quantum hardware.

The quantum noise experienced by qubits constructed as two-level coherent quantum mechanical systems can be described accurately by considering two different phenomena: relaxation and pure dephasing. The effects of these mechanisms are quantified by the relaxation time, $T_1$, and the dephasing time, $T_2$. Thermal interactions between the quantum information system and the environment can be neglected if the temperature of the system is low enough, which is a valid premise for state-of-the-art superconducting quantum processors, as these devices are cooled down to temperatures in the order of millikelvins \cite{decoherenceBenchmarking,klimov,fluctAPS,fluctApp,SchlorPhD,wallraff}. Generally, the effects of decoherence on quantum information are modelled by means of quantum channels, which are completely-positive trace-preserving (CPTP) linear maps betwen spaces of operators \cite{josuchannels}. Most of the literature on QEC assumes that the quantum noise level experienced by the qubits in a particular system will be identical for each quantum information processing task, independently of when the task is performed \cite{bicycle,qldpc15,patrick,patrick2,QTC,EAQTC,josu,josu2,toric,QEClidar,surfaceFowler}. This means that the relaxation and pure dephasing times of the qubits are assumed to be fixed and time-invariant. However, this behaviour has been disproven in recent experimental studies on quantum processors \cite{decoherenceBenchmarking,klimov,fluctAPS,fluctApp,SchlorPhD,fluctErrorBars,fluctDeph}. In fact, these works have shown that $T_1$ and $T_2$ can experience time variations of up to $50\%$ of their mean value with coefficients of variation of approximately $25\%$. In order to correctly account for the time-varying nature of the decoherence defining parameters of superconducting qubits, researchers have come up with the class of time-varying quantum channels (TVQC), a mathematical abstraction that enables the inclusion of time fluctuations to the models of quantum noise \cite{TVQC,outage}.

The multi-qubit TVQCs in \cite{TVQC,outage} were constructed considering that the realizations of $T_1$ and $T_2$ would be identical for all the qubits in each particular error correction block but would change from block to block. This assumption implies that the random variables that describe the fluctuations of the decoherence parameters of each of the qubits are perfectly correlated for the same block, a scenario reminiscent of the classical block/slow fading channel. In reality, the time fluctuations of these decoherence parameters occur because of the incoherent coupling of an ensemble of environmentally unstable near-resonant two-level systems (TLS) with each particular qubit \cite{decoherenceBenchmarking,klimov,fluctAPS,SchlorPhD}. These TLS' actually emerge due to atomic-scale defects that are present in the Josephson Junctions (JJ) that are used to make up the superconducting qubits \cite{TLSdefects,TLSphysrevB}. In light of this, it is reasonable to assume that the origin of the time-dependent nature of $T_1$ and $T_2$ is local to each of the constituent qubits of the quantum machine, which ultimately means that the realizations of the random variables that describe the fluctuations of those parameters in the TVQC model may be considered qubit-wise uncorrelated for each particular QEC block. This intuition has been confirmed by experiments conducted for a $2$-qubit superconducting system (See Supplementary Note D of \cite{decoherenceBenchmarking}), where the authors simultaneously measured the relaxation times of both qubits and studied the correlation between the obtained data for each of them.

In this article, we further cement the observation that the local TLS defects that are responsible for the fluctuations of the decoherence parameters of superconducting qubits are not significantly correlated and we show that these TLS defects will only affect each of the particular qubits within which they arise. In order to do so, we repeatedly estimate the relaxation times of the $5$-qubit quantum processors ibmq\_quito, ibmq\_belem, ibmq\_lima, ibmq\_santiago and ibmq\_bogota and then perform a correlation analysis on the measured fluctuations based on the Pearson correlation coefficient. We conclude that the obtained values of the Pearson correlation coefficients are not sufficiently high to observe significant correlation effects on the system. In consequence, we introduce the concept of fast time-varying quantum channels (FTVQC) as the appropriate mathematical model to describe the decoherence effects experienced by $n$-qubit superconducting systems. In this context, multi-qubit time varying quantum channels are then constructed with independent realizations of the decoherence parameters from qubit-to-qubit. In addition, we study the quantum channel capacity for the proposed family of quantum channels. Since the FTVQC channel resembles the classical scenario of fast fading, we discuss the ergodic quantum capacity, whose units are expressed in logical qubits per physical qubits, as a lower bound to the capacity of the FTVQC channels. Moreover, 
we prove that for the family of fast time-varying amplitude damping channels (FTVAD), the ergodic capacity is indeed equal to its quantum capacity. We also show that although a capacity loss is incurred in comparison to quantum channels that are assumed to be static, this change is not significant. Finally, we use numerical simulations to study the performance of planar codes and quantum turbo codes (QTC) when they operate over the FTVQC noise model. It is concluded that the performance of the codes worsens when compared to the static quantum channels, although this degradation is not as significant as the one codes experience over the previously considered multi-qubit TVQCs \cite{TVQC,outage}. Interestingly, we observe that the threshold of the surface codes deteriorates by a similar amount to the quantum capacity when the FTVQC multi-qubit model is considered.

\section{Time-varying quantum channels}

Time-varying quantum channels \cite{TVQC}, $\mathcal{N}(\rho,\omega,t)$, are defined as
\begin{equation}\label{eq:TVQCgen}
\mathcal{N}(\rho,\omega,t) = \sum_k E_k(\omega,t) \rho E_k^\dagger(\omega,t),
\end{equation}
where the $E_k(\omega,t)$ linear operators are the so-called Kraus operators of the operator-sum representation of a quantum channel, and are continuous-time random processes.

Decoherence arises from a wide range of physical processes involved in the interaction of qubits with their environment. In the context of supeconducting technologies, the principal vehicles for decoherence are energy relaxation and pure dephasing. The time-varying amplitude and phase damping quantum channel, $\mathcal{N}_\mathrm{APD}(\rho,\omega,t)$, \cite{TVQC} is a model that includes relaxation and pure dephasing effects, whose intensity (which is given by the relaxation time, $T_1$, and dephasing time, $T_2$) varies as a function of time. Note that whenever the Ramsey limit is saturated ($T_2\approx 2 T_1$), the channel is reduced to a time-varying amplitude damping channel, $\mathcal{N}_\mathrm{AD}(\rho,\omega,t)$ \cite{TVQC,outage,decoherenceBenchmarking}. The noise level of these quantum channels is characterized by the damping, $\{\gamma(\omega, t)\}$, and scattering, $\{\lambda(\omega,t)\}$, stochastic processes, which are functions of the qubit relaxation time $\{T_1(\omega,t)\}$ and the qubit dephasing time $\{T_2(\omega,t)\}$ as

\begin{equation}\label{eq:gammatime}
\gamma(\omega,t) = 1 - e^{-\frac{t}{T_1(\omega,t)}} \text{ and}
\end{equation}

\begin{equation}\label{eq:lambdatime}
\lambda(\omega,t) = 1 - e ^{\frac{t}{T_1(\omega,t)} - \frac{2t}{T_2(\omega,t)}}.
\end{equation}

The experimental analysis presented in \cite{decoherenceBenchmarking,TVQC} shows that $T_1(t,\omega)$ and $T_2(t,\omega)$ can be modelled by wide-sense stationary (WSS) random processes with means $\mu_{T_1},\mu_{T_2}$, standard deviations $\sigma_{T_1},\sigma_{T_2}$, and a stochastic coherence time, $T_\mathrm{c}$ in the order of minutes. Since the processing times for quantum algorithms and error correction rounds, $t_{\mathrm{algo}}$, are in the order of microseconds (the surface code cycle time is estimated to be $1 \mu s$ for superconducting devices) \cite{wallraff,TVQC,rsaRounds,googleSurfonemicro,oneMicrogidney}, $t_{\mathrm{algo}}\ll T_\mathrm{c}$, it is reasonable to assume that the processes remain constant during the execution of a quantum algorithm. In other words, $\{T_i(\omega,t)\}_{i=1}^2$ can be modelled as a set of random variables ($t=0$ has been selected without loss of generality due to the fact that the process is WSS.) $\{T_i(\omega)\}_{i=1}^2 = \{T_i(t, \omega)\rvert_{t=0}\}_{i=1}^2,\forall t\in[0,T],T<<T_\mathrm{c}$. Given that the random processes are assumed to be Gaussian, the random variables will also be Gaussian with distributions $\{\mathcal{N}(\mu_{T_i},\sigma^2_{T_i})\}_{i=1}^2$. However, since any and all realizations of $\{T_i(\omega)\}_{i=1}^2$ should be positive, they must be modelled as truncated Gaussian random variables in the region $[0,\infty]$. Therefore, the probability density functions are modelled as

\begin{equation}\label{eq:pdf}
f_{T_i}(t_i)=\begin{cases}
\frac{1}{\sigma_{T_i}\sqrt{2\pi}}\frac{\mathrm{e}^{-\frac{(t_i - \mu_{T_i})^2}{2\sigma_{T_i}^2}}}{1-\mathrm{Q}\left(\frac{\mu_{T_i}}{\sigma_{T_i}}\right)} &\text{ if } t_i\geq 0 \\
0 &\text{ if } t_i<0
\end{cases},
\end{equation}
where in the above expression, $i=\{1,2\}$ and $\mathrm{Q}(\cdot)$ is the Q-function defined as
\begin{equation}\label{josu6}
\mathrm{Q}(x) = \frac{1}{\sqrt{2\pi}}\int_x^{\infty}\mathrm{e}^{-\frac{x^2}{2}} dx.
\end{equation}

The twirled approximations of those time-varying quantum channels are also interesting since they can be simulated in an efficient manner using classical computers \cite{josuchannels,TVQC}. The time-varying Pauli twirl approximation (TVPTA), $\mathcal{N}_\mathrm{PTA}(\rho,\omega,t)$, is the Pauli channel \cite{josuchannels,TVQC} obtained by twirling a time-varying quantum channel by the $n$-fold Pauli group $\mathcal{P}_n$. Twirling the TVAD channel will lead to the Pauli channel (TVADPTA) described by the probabilities that each of the Pauli matrices has of taking place. Note that in this context these probabilities are realizations of the random processes:
\begin{equation}\label{eq:TVADPTA}
\begin{split}
& p_\mathrm{I}(\omega,t) = 1 - p_\mathrm{x}(\omega,t) - p_\mathrm{y}(\omega,t) - p_\mathrm{z}(\omega,t), \\
& p_\mathrm{x}(\omega,t) = p_\mathrm{y}(\omega,t) = \frac{1}{4}(1 - e^{-\frac{t}{T_1(\omega,t)}})\text{ and} \\
& p_\mathrm{z}(\omega,t) = \frac{1}{4}(1 + e^{-\frac{t}{T_1(\omega,t)}} - 2e^{-\frac{t}{2T_1(\omega,t)}}).
\end{split}
\end{equation}

For the TVAPD channel, the TVAPDPTA approximation is described by the realizations of the following stochastic processes for each of the Pauli matrices
\begin{equation}\label{eq:TVPTA}
\begin{split}
& p_\mathrm{I}(\omega,t) = 1 - p_\mathrm{x}(\omega,t) - p_\mathrm{y}(\omega,t) - p_\mathrm{z}(\omega,t), \\
& p_\mathrm{x}(\omega,t) = p_\mathrm{y}(\omega,t) = \frac{1}{4}(1 - e^{-\frac{t}{T_1(\omega,t)}})\text{ and} \\
& p_\mathrm{z}(\omega,t) = \frac{1}{4}(1 + e^{-\frac{t}{T_1(\omega,t)}} - 2e^{-\frac{t}{T_2(\omega,t)}}),
\end{split}
\end{equation}
where, once again, $T_1(\omega,t)$ and $T_2(\omega,t)$ are stochastic processes.

Another twirled channel of interest is the time-varying Clifford twirl approximation (TVCTA), $\mathcal{N}_\mathrm{CTA}(\rho,\omega,t)$ \cite{josuchannels,TVQC}, which for the TVAD channel will be a depolarizing channel with depolarizing parameter
\begin{equation}\label{eq:TVADCTA}
p(\omega,t) = \frac{3}{4} - \frac{1}{4}e^{-\frac{t}{T_1(\omega,t)}} - \frac{1}{2}e^{-\frac{t}{2T_1(\omega,t)}},
\end{equation}
and for the TVAPD channel a depolarizing channel with depolarizing parameter
\begin{equation}\label{eq:TVAPDCTA}
p(\omega,t) = \frac{3}{4} - \frac{1}{4}e^{-\frac{t}{T_1(\omega,t)}} - \frac{1}{2}e^{-\frac{t}{T_2(\omega,t)}},
\end{equation}
where, once more, $T_1(\omega,t)$ and $T_2(\omega,t)$ are stochastic processes.

\begin{table*}[]
\caption{Pearson correlation coefficients for the $T_1$ measurements obtained for the IBM quantum processors. Relaxation time is measured for two diferent calibration cycles for each of the machines. $r_{Q_iQ_j}$ is the obtained Pearson correlation coefficient for the relaxation times of qubit $i$ and qubit $j$ of the processor. Bootstrapping was used to determine the $95\%$ confdence intervals, presented in parentheses.}
\begin{tabular}{|l|l|l|l|l|l|l|l|}
\hline
\textbf{Scenario} & $r_{Q_0Q_1} $ & $r_{Q_0Q_2} $ & $r_{Q_0Q_3} $ & $r_{Q_0Q_4} $ & $r_{Q_1Q_2} $   \\ \hline
ibmq\_quito S1  &      $-0.207(-0.066,-0.35)$          &          $0.036(-0.094,0.17)$           &     $-0.287(-0.17,-0.41)$         &           $-0.076(-0.194,0.049)$         &      $-0.083(-0.223,0.05)$                      \\ \hline
ibmq\_quito S2  &       $0.096(0.022,0.167)$        &            $0.154(0.054,0.256)$         &      $0.53(0.444,0.6)$       &       $0.133(0.052,0.26)$  &     $0.024(-0.081,0.125)$        \\ \hline
 ibmq\_belem S1   &        $-0.002(-0.081,0.173)$       &           $0.013(-0.248,0.024)$           &       $-0.008(-0.047,0.059)$       &      $-0.019(-0.033,0.163)$                &              $-0.028(-0.085,0.03)$                   \\ \hline
ibmq\_belem S2    &       $-0.141(-0.23,-0.05)$       &         $-0.153(-0.25,-0.057)$            &      $-0.15(-0.24,-0.06)$        &         $-0.38(-0.46,-0.29)$           &     $0.2(0.096,0.3)$         \\ \hline
  ibmq\_lima S1  &        $-0.028(-0.09,0.03)$       &           $-0.09(-0.16,-0.024)$         &      $0.295(0.24,0.35)$        &          $0.27(0.221,0.31)$ &       $-0.024(-0.086,0.037)$       \\ \hline
  ibmq\_lima S2  &      $0.244(0.131,0.346)$         &         $0.347(0.24,0.45)$             &       $-0.0049(-0.125,0.113)$       &         $0.044(-0.077,0.166)$           &      $-0.011(-0.138,0.13)$                  \\ \hline
  ibmq\_santiago S1  &       $-0.035(-0.175,0.11)$        &         $0.1(-0.033,0.255)$            &      $0.16(0.005,0.31)$       &         $-0.05(-0.17,0.06)$ &            $-0.035(-0.18,0.11)$              \\ \hline
  ibmq\_bogota S1  &       $0.15(0.006,0.29)$        &        $0.093(-0.05,0.22)$              &      $-0.066(-0.19,0.055)$        &         $-0.11(-0.23,0.014)$          &            $-0.032(-0.17,0.11)$  \\ \hline
\end{tabular}
\centering
\begin{tabular}{|l|l|l|l|l|l|}
\hline
\textbf{Scenario}& $r_{Q_1Q_3} $ & $r_{Q_1Q_4} $  &  $r_{Q_2Q_3} $  & $r_{Q_2Q_4} $  & $r_{Q_3Q_4} $  \\ \hline
ibmq\_quito S1  &            $0.116(-0.032,0.262)$  &         $0.0031(-0.128,0.138)$     &          $-0.128(-0.25,0.002)$            &     $-0.161(-0.275,-0.025)$        &   $-0.0159(-0.154,0.133)$   \\ \hline
ibmq\_quito S2  & $0.0028(-0.078,0.08)$ &      $-0.093(-0.197,0.003)$       &         $0.031(-0.051,0.115)$            &      $0.049(-0.0438,0.14)$      &    $0.0799(-0.00441,0.159)$  \\ \hline
 ibmq\_belem S1   &  $0.059(10^{-5},0.117)$ &      $0.297(0.233,0.35)$        &         $0.069(-0.001,0.142)$           &    $0.023(-0.031,0.079)$       &      $-0.004(-0.055,0.04)$      \\ \hline
ibmq\_belem S2    & $0.117(0.026,0.205)$ &    $0.227(0.131,0.317)$         &          $0.062(-0.03,0.152)$            &      $-0.027(-0.131,0.069)$      &      $0.214(0.115,0.3)$      \\ \hline
  ibmq\_lima S1  &  $0.057(-0.006,0.11)$  &     $-0.033(-0.09,0.025)$       &         $0.163(0.1,0.23)$             &      $0.196(0.13,0.26)$     &    $0.49(0.437,0.54)$      \\ \hline
  ibmq\_lima S2  & $-0.013(-0.156,0.126)$  &      $-0.085(-0.2,0.03)$      &          $0.047(-0.085,0.18)$            &      $0.147(0.04,0.25)$       &   $-0.13(-0.25,-0.008)$  \\ \hline
  ibmq\_santiago S1  & $-0.66(-0.21,0.065)$ &     $-0.27(-0.39,-0.14)$         &       $0.162(0.017,0.3)$         &   $0.063(-0.11,0.21)$    &     $0.078(-0.07,0.22)$        \\ \hline
  ibmq\_bogota S1  & $0.19(0.054,0.33)$  &   $0.267(0.13,0.39))$       &            $-0.09(-0.22,0.051)$          &    $-0.086(-0.21,0.05)$       &      $0.13(-0.02,0.27)$      \\ \hline
\end{tabular}
\label{tab:T1corr}
\end{table*}

\section{Multi-qubit time-varying quantum channels}
Time-varying quantum channels (TVQC) describe the coherence loss of a qubit when the relaxation and dephasing times that describe the rate of interaction between the qubit and its environment fluctuate as functions of time for the same cooldown \cite{TVQC}. The proposal of this theoretical framework of quantum noise was motivated by the repeated observation of such intra-cooldown stochastic behaviour of superconducting qubit decoherence parameters in the literature \cite{decoherenceBenchmarking,klimov,fluctAPS,fluctApp,SchlorPhD,fluctErrorBars,fluctDeph}.

TVQCs successfully account for the experimentally observed time-varying nature of the decoherence experienced by single superconducting qubits. However, quantum information processing tasks (algorithms, error correction, memories or communications) require sets of qubits to appropriately achieve the tasks that they are designed for. Thus, it is necessary to consider multi-qubit time-varying quantum channels to accurately assess the impact of $T_1$ and $T_2$ fluctuation on practical quantum computing. In this section, we discuss the way in which the noise of such multi-qubit systems can be modelled when time-fluctuations are present. In order to do so, we study the locality of the time flcutuations before discussing the multi-qubit time-varying noise models for superconducting chips.

\subsection{Decoherence parameter fluctuations are local to each qubit}\label{res:exp}
We have conducted simultaneous measurements of the relaxation times of the IBM quantum processors ibmq\_quito, ibmq\_belem, ibmq\_lima, ibmq\_santiago and ibmq\_bogota \cite{IBMqexp} spaced out over time (See Appendix \ref{appC} for a detailed description of the experiments). All of these quantum processors are comprised of $5$ superconducting qubits (with different architectures and connectivity). The aim of this experiment is to verify that the fluctuations of the decoherence parameters of each of the constituent qubits of these $5$-qubit superconducting processors are local to the particular qubits themselves. In this way, we want to extend the analysis undertaken in \cite{decoherenceBenchmarking} for a $2$-qubit processor to more complex Noisy Intermediate-Scale Quantum (NISQ) devices.

Table \ref{tab:T1corr} shows the results obtained for the Pearson correlation tests we conducted on the measured relaxation times (See Appendix \ref{appA} for the description of the statistical analysis). Based on these outcomes, it is clear that the $T_1$ fluctuations are not significantly correlated between the qubits of the systems in the majority of the scenarios. In fact, the values of the correlation coefficients are all below the threshold of significant correlation (which stands at $0.6$) that is generally considered for classical scenarios (See Appendix \ref{appA}). This holds even when considering $95\%$ confidence intervals. These results support the hypothesis that the fluctuations of the decoherene parameters of superconducting qubits are caused by local effects, making it reasonable to assume that the local TLS defects responsible for these fluctuations are not coupled amongst themselves. We must note that there is an outlier among our results where correlation appears to be present. This occurs for qubits $0$ and $3$ of the ibmq\_quito S2 scenario, whose correlation coefficient has a value of $0.53$ and the upper limit of the confidence interval is set to $0.6$. However, the $0^{th}$ qubit of such scenario shows a step size transition at the end of the experiment (See Appendix \ref{appC}), an effect that may have impacted the value of the obtained Pearson correlation. In fact, considering just the samples before the hard transition, the results are $r_{Q_0Q_3}=0.091(-0.026,0.217)$, which are well within the range of uncorrelated values. In light of these results, it is safe to assume that the fluctuations of the decoherence parameters of superconducting qubits will be due to local effects.

\subsection{Fast time-varying quantum channels}\label{res:FTVQC}
In \cite{TVQC,outage}, the authors proposed a multi-qubit time-varying quantum channel under the assumption that the relaxation and dephasing times are constant for all the qubits inside a particular error correction block, but vary from block to block. Thus, a realization of the multi-qubit time-varying quantum channel for block $m\in\mathbb{N}$ of duration $t_\mathrm{algo}$ can be described mathematically as

\begin{equation}\label{eq:slowNTVQC}
\begin{split}
&\mathcal{N}^{(n)}(\rho,t_1^m,t_2^m,t=t_\mathrm{algo}) = \mathcal{N}^{\otimes n}(\rho,t_1^m,t_2^m,t=t_\mathrm{algo})\\  &= \sum_{\mathcal{E}\in(\{E_k\}_k)^{\otimes n}} \mathcal{E}(t_1^m,t_2^m,t=t_\mathrm{algo})\rho \mathcal{E}^\dagger(t_1^m,t_2^m,t=t_\mathrm{algo}),
\end{split}
\end{equation}
where $t_1^m$ and $t_2^m$ refer to the realizations of the sequences of independent random variables $\{T_1^m(\omega)\}_{m\in \mathbb{N}}$ and $\{T_2^m\}(\omega)\}_{j\in \mathbb{N}}$, respectively, and $\mathcal{E}(t_1^m,t_2^m,t=t_\mathrm{algo}) = \mathcal{E}_1(t_1^m,t_2^m,t=t_\mathrm{algo})\otimes\cdots\otimes \mathcal{E}_{n-1}(t_1^m,t_2^m,t=t_\mathrm{algo})\otimes\mathcal{E}_n(t_1^m,t_2^m,t=t_\mathrm{algo})$ with $\mathcal{E}_j(t_1^m,t_2^m,t=t_\mathrm{algo})\in\{E_k(t_1^m,t_2^m,t=t_\mathrm{algo})\}_{k}$ referring to the Kraus operators of the single qubit TVQCs associated to those realizations of the decoherence parameters. $(\{E_k\}_k)^{\otimes n}$ refers to the set of $n$-fold tensor products of the Kraus operators of the single-qubit TVQCs. Note that, here, those Kraus operators are related to some TVQC that depends on the relaxation and dephasing times. This, however, does not exclude the construction of similar time-varying quantum channels depending on other parameters that may show similar behaviour to $T_1$ and $T_2$.

This means that the multi-qubit channel considered in \cite{TVQC,outage} assumes that the realizations of the random variables that describe the noise experienced by each of the qubits of the system are identical, i.e, that these random variables are perfectly correlated. As discussed in \cite{TVQC,outage}, this model is reminiscent of the classical slow fading scenario. For simplicity, we will adopt this terminology to refer to the multi-qubit TVQCs that add perfectly correlated noise to each qubit. Hence, we name these types of channels as slow time-varying quantum channels (STVQC). If we recall our discussion in the previous section, we now know that the STVQCs considered in \cite{TVQC,outage} are not the most accurate type of multi-qubit time-varying quantum channels for the superconducting NISQ devices we are considering in this paper. This has to do with the fact that the origin of the decoherence parameter fluctuations of superconducting qubits are local to each qubit.

Based on our discussions thus far, we know that the way to construct multi-qubit time-varying quantum channels to accurately model the superconducting hardware considered in this article is by considering that the individual TVQCs that make up the multi-qubit channel for each QEC block, $m\in\mathbb{N}$, are defined by sequences of random variables $\{T_1^{m^j}(\omega)\}_{j=1}^n$ and $\{T_2^{m^j}(\omega)\}_{j=1}^n$ whose elements are independent amongst themselves. Thus, the realizations of the decoherence parameter random variables will not only be independent from block to block, but also from qubit to qubit inside a block. In this way, a realization of the multi-qubit time-varying quantum channel for block $m\in\mathbb{N}$ of duration $t_\mathrm{algo}$ can be described mathematically as
\begin{equation}\label{eq:fastNTVQC}
\begin{split}
&\mathcal{N}^{(n)}(\rho,\{t_1^{m^j}\}_{j=1}^n,\{t_2^{m^j}\}_{j=1}^n,t=t_\mathrm{algo})  \\& = \bigotimes_{j=1}^n \mathcal{N}(\rho,t_1^{m^j},t_2^{m^j},t=t_\mathrm{algo})\\  &= \!\!\sum_{\mathcal{E}\in\bigotimes_{j=1}^n(\{E_{k}(t_1^{m^j},t_2^{m^j})\}_{k})} \!\!\!\!\!\!\!\!\!\!\!\!\!\!  \mathcal{E}(\{t_1^{m^j}\}_{j=1}^n,\{t_2^{m^j}\}_{j=1}^n)\rho \mathcal{E}^\dagger(\{t_1^{m^j}\}_{j=1}^n,\{t_2^{m^j}\}_{j=1}^n),
\end{split}
\end{equation}
where $\{t_1^{m^j}\}_{j=1}^n$ and $\{t_2^{m^j}\}_{j=1}^n$ refer to the realizations of the sequences of independent random variables $\{T_1^{m^j}(\omega)\}_{j=1}^n$ and $\{T_2^{m^j}(\omega)\}_{j=1}^n$, respectively, and $\mathcal{E}(\{t_1^{m^j}\}_{j=1}^n,\{t_2^{m^j}\}_{j=1}^n) = \mathcal{E}_1(t_1^{m^1},t_2^{m^1})\otimes\cdots\otimes \mathcal{E}_{n-1}(t_1^{m^{n-1}},t_2^{m^{n-1}})\otimes\mathcal{E}_n(t_1^{m^n},t_2^{m^n})$ with $\mathcal{E}_j(t_1^{m^j},t_2^{m^j})\in\{E_k(t_1^{m^j},t_2^{m^j})\}_{k}$ referring to the Kraus operators of the single qubit TVQCs associated to those realizations of the decoherence parameters, where we have incurred in the slight abuse of notation $E_k(\cdot,\cdot) = E_k(\cdot,\cdot,t=t_{algo})$. Note that the sequence of random variables will also be independent from block to block, i.e., the elements of $\{T_1^{m^j}(\omega)\}_{m,j}, \forall m,j\in\mathbb{N}$ and $\{T_2^{m^j}(\omega)\}_{m,j}, \forall m,j\in\mathbb{N}$ will be independently distributed, respectively.

In this way, each of the realizations of this multi-qubit TVQC will have a different noise ``intensity'' (the actual noise operators will be the same, but the noise level will change) for each of the qubits of the superconducting quantum processor. This model resembles the classical scenario of fast fading \cite{fading,tse}. In such scenarios, the fading process changes so quickly that each of the symbols of a transmitted codeword is subjected to a different fading gain (where the fading gain for each of the symbols is independent) and, thus, to different noise levels. Note that for the multi-qubit TVQC that we are discussing, the values of the $T_1$ and $T_2$ parameters will vary slowly, but since their particular realizations are independent from qubit to qubit, the channel actually resembles to the fast fading scenario. Thus, we will once again borrow from the classical realm and refer to these quantum channels as fast time-varying quantum channels (FTVQC).

In this context, we will refer to the most commonly considered construction of multi-qubit channels as static quantum channels. A widespread assumption in the QECC community is that all the qubits of a quantum processor experience the same noise through time \cite{josuchannels,TVQC}. This implies that all of the qubits of the system have the same decoherence parameters, $T_1$ and $T_2$, and that these will not vary with the passage of time. Thus, multi-qubit static channels are constructed by evaluating the Kraus operators of the channels with the mean values of those parameters. Consequently, the static multi-qubit time-varying quantum channel for every block $\forall m\in\mathbb{N}$ of duration $t_\mathrm{algo}$ is described mathematically as

\begin{equation}\label{eq:staticVQC}
\begin{split}
&\mathcal{N}^{(n)}(\rho,\mu_{T_1},\mu_{T_2},t=t_\mathrm{algo}) = \mathcal{N}^{\otimes n}(\rho, \mu_{T_1},\mu_{T_2},t=t_\mathrm{algo})\\  &= \sum_{\mathcal{E}\in(\{E_k\}_k)^{\otimes n}} \mathcal{E}(\mu_{t_1},\mu_{t_2},t=t_\mathrm{algo})\rho \,\mathcal{E}^\dagger(\mu_{t_1},\mu_{t_2},t=t_\mathrm{algo}),
\end{split}
\end{equation}
where it can be seen that the channel will be equal for all the QEC blocks, since $\mathcal{E}(\mu_{T_1},\mu_{T_2}) = \mathcal{E}_1(\mu_{T_1},\mu_{T_2},t=t_\mathrm{algo})\otimes\cdots\otimes \mathcal{E}_{n-1}(\mu_{T_1},\mu_{T_2},t=t_\mathrm{algo})\otimes\mathcal{E}_n(\mu_{T_1},\mu_{T_2},t=t_\mathrm{algo})$ with $\mathcal{E}_j(\mu_{T_1},\mu_{T_2},t=t_\mathrm{algo})\in\{E_k(\mu_{T_1},\mu_{T_2},t=t_\mathrm{algo})\}_{k}$ is independent of the block $m$. $(\{E_k\}_k)^{\otimes n}$  denotes the set of $n$-fold tensor products of the Kraus operators of the single-qubit TVQCs.

\section{Quantum channel capacity}
The quantum channel capacity, $C_Q$, for a static quantum channel $\mathcal{N}$, is defined as the supremum of all achievable quantum coding rates (the quantum coding rate is defined as $R_\mathrm{Q} = k/n$, where $k$ is the number of logical qubits and $n$ is the number of physical qubits). A rate, $R_Q$, is said to be achievable for $\mathcal{N}$ if there exists a sequence of $[[n,k]]$ quantum codes of rate $R_Q$ such that the probability of error of the codes goes to zero as the blocklength $n$ of the code goes to infinity, $n\rightarrow \infty$.  The definition of the quantum capacity, often referred to as the Lloyd-Shor-Devetak (LSD) capacity, is given by the following theorem \cite{wildeQIT,quantumcap}.
\begin{thm}[LSD capacity]\label{thm:quantumcap}
\textit{The quantum capacity, $C_\mathrm{Q}(\mathcal{N})$, of a quantum channel, $\mathcal{N}$, is equal to the regularized coherent information of the channel
\begin{equation}\label{eq:cqreg}
C_\mathrm{Q}(\mathcal{N}) = Q_{\mathrm{reg}}(\mathcal{N}),
\end{equation}
where
\begin{equation}\label{eq:coherentreg}
Q_{\mathrm{reg}}(\mathcal{N}) = \lim_{n\rightarrow \infty} \frac{1}{n} Q_{\mathrm{coh}}(\mathcal{N}^{\otimes n}).
\end{equation}
The channel coherent information, $Q_{\mathrm{coh}}(\mathcal{N})$, is defined as
\begin{equation}\label{eq:coh}
Q_{\mathrm{coh}}(\mathcal{N}) = \max_{\rho} (S(\mathcal{N}(\rho)) - S(\rho_\mathrm{E})),
\end{equation}
where $S$ is the von Neumann entropy and $S(\rho_\mathrm{E})$ measures how much information the environment has.}
\end{thm}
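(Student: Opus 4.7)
The plan is to prove the LSD theorem by establishing two matching inequalities, the direct (achievability) and converse parts, using the standard information-theoretic framework: $C_\mathrm{Q}(\mathcal{N}) \geq Q_\mathrm{reg}(\mathcal{N})$ and $C_\mathrm{Q}(\mathcal{N}) \leq Q_\mathrm{reg}(\mathcal{N})$. For both directions it suffices to argue about $Q_\mathrm{coh}(\mathcal{N}^{\otimes n})/n$ for fixed (but arbitrary) $n$ and then take $n \to \infty$, because the quantum capacity is itself defined with respect to many channel uses and the regularization absorbs any subadditivity of the coherent information. Throughout I would work in the Stinespring picture, dilating $\mathcal{N}$ to an isometry $U^\mathcal{N}_{A \to BE}$ and tracking the joint state on the channel output $B$, the environment $E$, and a purifying reference $R$.

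For the achievability bound, the plan is to use a random coding argument based on the decoupling theorem. First I would fix an input state $\rho_A$ that (nearly) attains $Q_\mathrm{coh}(\mathcal{N})$, take its purification $|\phi\rangle^{AR}$, and then encode a maximally entangled state on $A'R'$ (with $|A'|/|A| \approx 2^{R}$) into a random subspace of the $n$-fold input system by applying a Haar-random isometry. The key step is to show that for any $R < Q_\mathrm{coh}(\rho, \mathcal{N})$, with high probability over the random code the output state on $R'E^n$ factorizes into $\rho^{R'} \otimes \rho^{E^n}$ up to vanishing trace distance; this is exactly Dupuis--Berta--Wullschleger--Renner-style decoupling, and its one-shot version can be bounded by a smoothed min-entropy expression that converges, via the quantum asymptotic equipartition property, to $H(R')-H(E) = H(R')-H(R B) = R - I_\mathrm{c}(\rho,\mathcal{N})$ per channel use. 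Once $R'$ and $E^n$ are decoupled, Uhlmann's theorem supplies a decoding isometry acting only on $B^n$ whose composition with the encoder approximates the identity channel on $A' \to A'$, proving achievability of every rate below $Q_\mathrm{coh}(\rho,\mathcal{N})$. Maximizing over $\rho$ gives $Q_\mathrm{coh}(\mathcal{N})$; applying the same argument to $\mathcal{N}^{\otimes k}$ for every $k$ and using blocking yields $Q_\mathrm{reg}(\mathcal{N})$.

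For the converse, I would start from any sequence of $[[n,k_n]]$ codes with $k_n/n \to R$ and vanishing error $\epsilon_n$, and feed half of a maximally entangled state $\Phi^{M\bar M}$ of Schmidt rank $2^{k_n}$ through the encoder, the channel $\mathcal{N}^{\otimes n}$, and the decoder, obtaining an output state $\sigma^{M\bar M}$ close in trace distance to $\Phi^{M\bar M}$. Chaining the data processing inequality for coherent information gives
\begin{equation*}
k_n = I_\mathrm{c}(M\rangle \bar M)_{\Phi} \;\leq\; I_\mathrm{c}(M\rangle B^n)_{\omega} + \delta(\epsilon_n) \;\leq\; Q_\mathrm{coh}(\mathcal{N}^{\otimes n}) + \delta(\epsilon_n),
\end{equation*}
where $\omega$ is the state on $M B^n$ after the encoder and channel, and $\delta(\epsilon_n) \to 0$ by the Alicki--Fannes--Winter continuity bound applied to the coherent information. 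Dividing by $n$ and taking $n \to \infty$ gives $R \leq Q_\mathrm{reg}(\mathcal{N})$, which is the converse.

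The main obstacle is the achievability step, specifically proving the decoupling estimate in a form strong enough to survive the i.i.d.\ asymptotic limit. A clean execution requires either smoothed min-entropy techniques plus the quantum AEP, or the older typical-subspace and random-CSS-code arguments due to Shor and Devetak, both of which demand careful bookkeeping of $\epsilon$-typical projectors, privacy amplification, and Uhlmann decoders. Secondary technical difficulties are the continuity estimate in the converse, which must be applied to coherent information (an entropy difference, not a single entropy) and therefore requires the Alicki--Fannes--Winter form rather than the naive Fannes inequality, and the fact that $Q_\mathrm{coh}$ is generically nonadditive, which forces the regularization in the statement and prevents any single-letter shortcut.
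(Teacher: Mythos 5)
The paper does not actually prove this theorem: it is stated in the Methods section as the known Lloyd--Shor--Devetak background result, with the proof deferred entirely to the cited references (Wilde's textbook and the Gyongyosi et al.\ survey). Your outline --- decoupling-based random coding with an Uhlmann decoder for achievability, and data processing plus Alicki--Fannes--Winter continuity for the converse --- is precisely the standard proof contained in those references (modulo a slightly garbled intermediate entropy identity, $H(R')-H(E)=R-I_{\mathrm{c}}$, whose conclusion that rates below the coherent information are achievable is nonetheless the correct one), so it is correct and takes essentially the same approach as the paper's source.
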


There is no general single-key formula to compute the regularization necessary to calculate the quantum channel capacity given in theorem \ref{thm:quantumcap}. This is due to the fact that the coherent information of the channel is not generally additive \cite{wildeQIT,bothAdd}. However, for specific classes of degradable quantum channels, such as the amplitude damping (AD) channel, the channel coherent information has been shown to be additive, reducing the expression of the regularization shown earlier to a single-letter formula:
\begin{equation}
C_Q(\mathcal{N})= \lim_{n\rightarrow \infty}\frac{1}{n} Q_\mathrm{coh}(\mathcal{N}^{\otimes n}) = \lim_{n\rightarrow \infty}\frac{1}{n} n Q_\mathrm{coh}(\mathcal{N})=Q_\mathrm{coh}(\mathcal{N}),
\end{equation}
that is, the quantum channel capacity is actually the same as the channel coherent information. Degradable and anti-degradable quantum channels are defined as \cite{wildeQIT}

\begin{definition}[Degradable and anti-degradable channels]
A channel $\mathcal{N}$ from system $A$ to system $B$ is said to be degradable if there exists a CPTP map $\mathcal{D}$ from system $B$ to the environment $E$ such that $\mathcal{N}^c = \mathcal{D}\circ \mathcal{N}$, where $\mathcal{N}^c$ is named the complementary channel from system $A$ to the environment $E$. Additionally, a channel $\mathcal{N}$ from system $A$ to system $B$ is said to be anti-degradable if there exists a CPTP map $\mathcal{D}$ from the environment $E$ to system $B$ such that $\mathcal{N} = \mathcal{D}\circ \mathcal{N}^c$, where $\mathcal{N}^c$ is named the complementary channel from system $A$ to the environment $E$.
\end{definition}

Both degradable and anti-degradable channels have the property of having additive channel coherent information \cite{bothAdd}. Moreover, anti-degradable channels have always vanishing quantum channel capacity \cite{wildeQIT}.

The quantum capacity of an AD channel with damping parameter $\gamma \in [0,1]$ is equal to \cite{wildeQIT,quantumcap}
\begin{equation}\label{eq:ADcap}
C_\mathrm{Q}(\gamma) = \max_{\xi\in[0,1]}  \mathrm{H}_2((1-\gamma) \xi) - \mathrm{H}_2(\gamma \xi),
\end{equation}
whenever $\gamma \in [0, 1/2]$, and zero for $\gamma\in [1/2,1]$. $\mathrm{H}_2(x)$ is the binary entropy. This comes from the fact that the AD channel is a degradable channel for $\gamma\in [0,1/2]$ and anti-degradable for $\gamma\in [1/2,1]$.

An expression for the quantum capacity of the widely used Pauli channels remains unknown since its coherent information is not additive \cite{josuchannels,wildeQIT}. However, a lower bound that can be achieved by stabilizer codes, the hashing bound (which equals the single-qubit coherent information of the channel), $C_\mathrm{H}$, \cite{wildeQIT} is known. The reason why the quantum capacity of a Pauli channel can be higher than the hashing bound (this is the same as saying that the coherent information is superadditive), i.e. $C_\mathrm{Q} \geq C_\mathrm{H}$, is the degenerate nature of quantum codes \cite{degenPRL,degen, degen_2}, a quantum-exclusive phenomenon through which several distinct channel errors affect encoded quantum states in an indistinguishable manner. In fact, the depolarizing channel has been proven to be superadditive for vey noisy depolarizing probabilities \cite{wildeQIT}.

The hashing bound for a Pauli channel defined by the probability mass function $\mathbf{p} = (p_\mathrm{I},p_\mathrm{x},p_\mathrm{y},p_\mathrm{z})$ is given by \cite{wildeQIT}
\begin{equation}\label{eq:hash}
C_\mathrm{H}(\mathbf{p}) = 1 - \mathrm{H}_2(\mathbf{p}).
\end{equation}
$\mathrm{H}_2(\mathbf{p}) = -\sum_j p_j\log_2(p_j)$ is the entropy in bits of a discrete random variable with probability mass function given by $\mathbf{p}$.

\subsection{Classical fast fading channels}\label{meth:fastfading}
% explicar el fast fading y la capacidad ergodica
As stated in the previous section, the FTVQC model is similar to the classical scenario of fast fading channels and, thus, we will introduce the capacity of the latter channels before we discuss the one of their quantum couterparts. Consider the classical scenario where the received symbols, $y[m]$, are given by
\begin{equation}
y[m] = \alpha[m]x[m] + w[m],
\end{equation}
where $x[m]$ refers to the transmitted symbols, $\alpha[m]$ refers to the fading gains, and $w[m]$ refers to i.i.d. complex additive white Gaussian noise \cite{tse}. If the fading process has a stochastic coherence time that is lower than the duration of a symbol, then the set of fading gains will be given by the realization of a sequence of i.i.d. random variables. This classical scenario is known as the fast fading channel, and has a well defined capacity, known as the ergodic capacity, defined as \cite{tse}
\begin{equation}\label{eq:ergo}
C_\mathrm{erg} = \mathrm{E}\{C(\omega)\} = \mathrm{E}\{\log(1+|\alpha(\omega)|^2\mathrm{SNR})\},
\end{equation}
where $\mathrm{SNR}$ refers to the signal-to-noise ratio. The intuition behind this limit is that one can average over many independent fades of the channel by coding over a large number of coherence time intervals. In this way, a reliable rate given by the mean of the ``instantaneous'' capacities can indeed be achieved.

\subsection{Capacity of FTVQCs}\label{sub:ergcap}
We have seen that the quantum capacity is the maximum rate at which quantum information can be communicated/corrected over many independent uses of a noisy quantum channel. Therefore, the quantum channel capacity is a quantity of interest for quantum coding theorists, as it represents the maximum rate at which QECCs can correct the effects of a specific noise map. 

For this reason, studying the quantum capacity of time-varying quantum channels becomes a fundamental task if we are to correctly design QECCs to operate over such decoherence models. In  \cite{outage}, the concept of the quantum outage probability as the asymptotically achievable error rate was discussed for slow time-varying quantum channels. The need for a new concept like the outage probability stems from the fact that the quantum capacity of the aforementioned channels is strictly zero, which makes it necessary to employ other information theoretical limits to study this particular family of channels. Given the similarity that exists between slow time-varying quantum channels and the classical scenario of slow fading, it makes good sense to adapt the outage probability (the theoretical quantity used to study classical slow fading) to the quantum paradigm. This has resulted in the proposition of the quantum outage probability as the most appropriate metric to study STVQCs \cite{outage}.

Let us now consider fast time-varying quantum channels (FTVQC). Because the capacity of these channels is not strictly zero (generally), we will be able to use it to determine the maximum coding rate over these channels. As stated in the previous section, the elements of the sequences of random variables $\{T_1^j(\omega)\}_{j=1}^n$ and $\{T_2^j(\omega)\}_{j=1}^n$ in a particular QEC block are independent. From this point we will assume that those elements are identically distributed too. The rationale behind this is that one needs to know how the $T_1$ and $T_2$ of each of the qubits is distributed (in general each qubit might have different means and standard deviations) in order to analyze the capacity and to perform numerical simulations. Under such assumption, a lower bound for the the quantum capacity of the combined amplitude and phase damping FTVQC is provided in the following theorem.

\begin{thm}[Quantum capacity of APD FTVQCs]
\textit{The quantum capacity of the combined amplitude and phase damping FTVQC is lower bounded by its ergodic quantum channel capacity
\begin{equation}\label{eq:CergAPD}
\begin{split}
C_\mathrm{Q}(\bar{\gamma},\bar{\lambda}) &  \geq C_\mathrm{Q}^\mathrm{erg}(\bar{\gamma},\bar{\lambda}) \\& = \mathrm{E}\{Q_\mathrm{coh}(\omega)\} = \int\int Q_\mathrm{coh}(\gamma,\lambda) p_{\gamma,\lambda}(\gamma,\lambda) d\gamma d\lambda \\
&= \int \int Q_\mathrm{coh}(t_\mathrm{algo},t_1,t_2)p_{T_1,T_2}(t_1,t_2) dt_1 dt_2, 
\end{split}
\end{equation}
where $Q_\mathrm{coh}$ refers to the channel coherent information, operator $\mathrm{E}\{\cdot\}$ is the mean, and $\bar{\gamma},\bar{\lambda}$ refer to the damping and scattering probabilities defined by the mean relaxation and dephasing times.}
\end{thm}
\begin{proof}
We can actually use the quantum channel capacity to quantify the maximum coding rate that can be achieved over fast time varying quantum channels. To clarify this, let us look at how the capacity of a FTVQC is computed. We first obtain the realizations $\{t_1^{m^j}\}_{j=1}^n$ and $\{t_2^{m^j}\}_{j=1}^n$ for a block $m\in\mathbb{N}$ of the sequences of the relaxation $\{T_1^{m^j}(\omega)\}_{j=1}^n$ and dephasing $\{T_2^{m^j}(\omega)\}_{j=1}^n$ random variables and integrate them in the FTVQC channel model (See equation \eqref{eq:fastNTVQC}). Then, we can bound the capacity of the realization of the fast time-varying quantum channel (which will be fixed) for block $m\in\mathbb{N}$ as
\begin{equation}\label{k3}
\begin{split}
&C_\mathrm{Q}\left(\mathcal{N}(\rho,\{t_1^{m^j}\}_{j=1}^n,\{t_2^{m^j}\}_{j=1}^n)\right) \\ & = \lim_{n\rightarrow \infty}\frac{1}{n}Q_\mathrm{coh}\left(\bigotimes_{j=1}^n \mathcal{N}(\rho,t_1^{m^j},t_2^{m^j})\right)  \\ & \geq \lim_{n\rightarrow \infty}\frac{1}{n}\sum_{k=1}^n Q_\mathrm{coh}\left(\mathcal{N}(\rho,t_1^{m^j},t_2^{m^j})\right),
\end{split}
\end{equation}
where we incurred in the abuse of notation $\mathcal{N}(\cdot,\cdot,\cdot) = \mathcal{N}(\cdot,\cdot,\cdot, t= t_\mathrm{algo})$ for simplicity. The inequality arises from the fact that the channel coherent information might be superadditive in general, i.e. $Q_\mathrm{coh}(\mathcal{N}\otimes\mathcal{M})\geq Q_\mathrm{coh}(\mathcal{N})+Q_\mathrm{coh}(\mathcal{M})$. The combined amplitude and phase damping channel has been proven not to be degradable for the region where its coherent information is positive \cite{APDcap} and, thus, the additivity of its coherent information remains an open question. For the Pauli channels (the twirled approximations of the are included in this family), the channel coherent information has been proven to be strictly superadditive for some channel instances (very noisy depolarizing channel, for example). 

Moreover, note that the sequence of values $\{t_1^{m^j}\}_{j=1}^n$ and $\{t_2^{m^j}\}_{j=1}^n$ specify the relaxation and dephasing parameters for each one of the $n$ qubits in the $m^{\mathrm{th}}$ block. Therefore, the channel capacity in \eqref{k3} will depend on such realizations of the decoherence parameters, which means that, once again, the channel capacity will itself become a random variable, $C_Q(\omega)$. In fact, the bound we derived will also become a random variable, $\lim_{n\rightarrow \infty}\frac{1}{n}\sum_{k=1}^n Q_\mathrm{coh}\left(\mathcal{N}(\rho,T_1^{m^j}(\omega),T_2^{m^j}(\omega))\right)$.
However, because of the law of large numbers and due to the fact that the elements of the sequences of random variables $\{T_1^{m^j}(\omega)\}_{j=1}^n$ and $\{T_2^{m^j}(\omega)\}_{j=1}^n$ are independent (this is the case for FTVQCs) and identically distributed (assumed before) the following holds
\begin{equation}\label{eq:proofConv}
\begin{split}
&\lim_{n\rightarrow \infty}\frac{1}{n}\sum_{k=1}^n Q_\mathrm{coh}\left(\mathcal{N}(\rho,t_1^{m^j},t_2^{m^j})\right)  \\&  = \mathrm{E}\left\lbrace Q_\mathrm{coh}\left(\mathcal{N}(\rho,T_1^{m^j}(\omega),T_2^{m^j}(\omega)\right)\right\rbrace = \mathrm{E}\{Q_\mathrm{coh}(\omega)\}.
\end{split}
\end{equation}
The above expression is true for almost all realizations of the sequences of random variables of the decoherence parameters, or similarly, for almost all blocks $m$. This is similar to what happens with the channel capacity of classical fast fading channels \cite{fading,tse}, where the capacity is usually referred to as the ergodic channel capacity. That is why we will refer to the quantity in \eqref{eq:proofConv} as the ergodic quantum channel capacity, $C_\mathrm{Q}^\mathrm{erg}$, and the bound in \eqref{k3} can be written as
\begin{equation}\label{eq:capboundfinal}
C_\mathrm{Q}(\mathcal{N}) \geq C_\mathrm{Q}^\mathrm{erg}(\mathcal{N}) = \mathrm{E}\{Q_\mathrm{coh}(\omega)\}.
\end{equation}
\end{proof}

 In the obtained bound, th ergodic quantum capacity is a function of $T_1$ and $T_2$, as well as of $t_\mathrm{algo}$. This comes from the fact that $\gamma$ and $\lambda$ are functions of those three parameters, and in the integral $t_\mathrm{algo}$ is fixed (its value can be obtained from $\bar{\gamma},\bar{\lambda}$). Note that the quantum capacity of the combined amplitude and phase damping channel is unknown at the time of writing \cite{APDcap}. This comes from the fact that the channel is not degradable in the region where the channel coherent information is positive, implying that it might be superadditive \cite{APDcap}. This unknown superadditivity problem extends to its FTVQC version, implying that we can only provide a bound.

\begin{figure}[!ht]
\centering
\includegraphics[width=\linewidth]{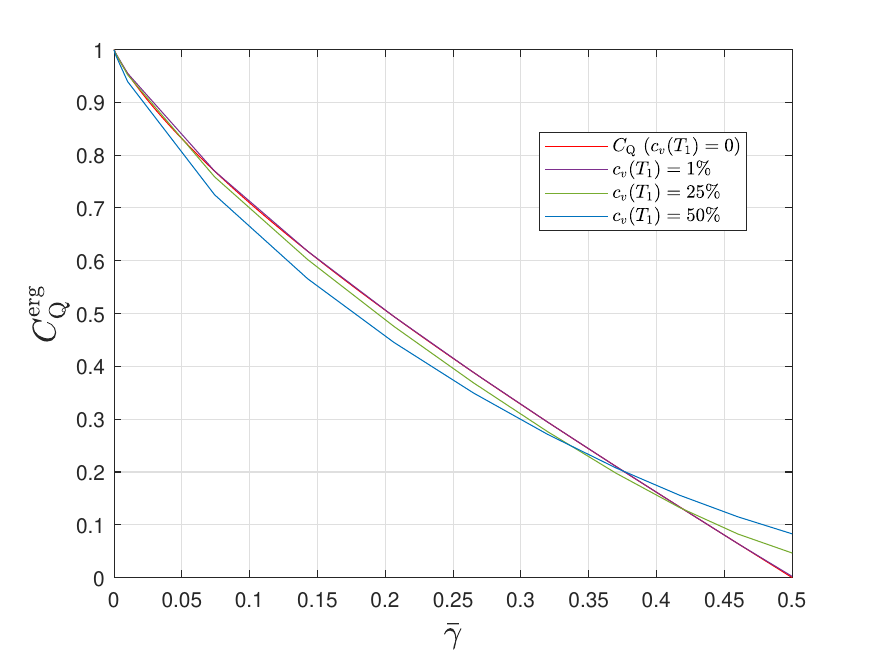}
\caption{\textbf{Quantum capacity of the FTVAD channel.} The metric is calculated for FTVADs with $c_\mathrm{v}(T_1)=\{1,25,50\}\%$.}
\label{fig:CergAD}
\end{figure}

\begin{figure*}[!ht]
\centering
\includegraphics[width=\linewidth]{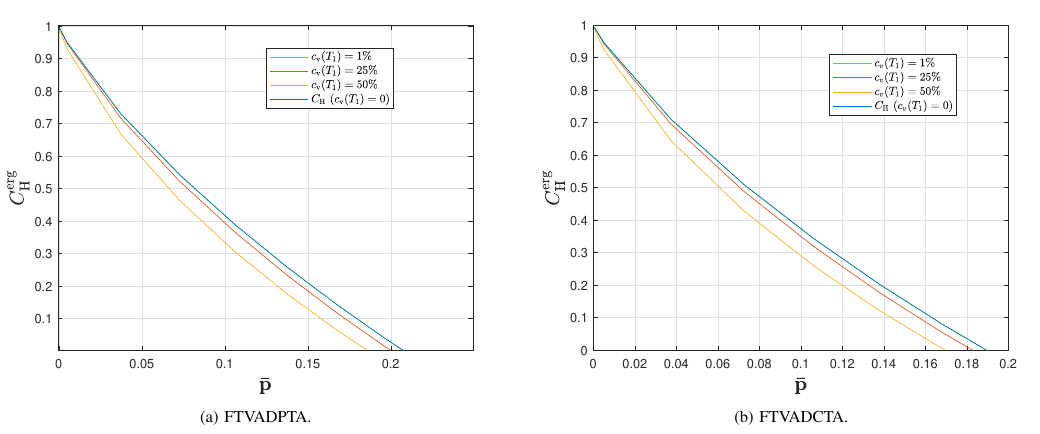}
\caption{\textbf{Ergodic hashing bounds for the twirled approximations of the AD channel.} The metric is calculated for $c_\mathrm{v}(T_1)=\{1,25,50\}\%$. \textbf{a} fast time-varying amplitude damping Pauli twirl approximation. \textbf{b} fast time-varying amplitude damping Clifford twirl approximation.}
\label{fig:CerghashAD}
\end{figure*}

The time-varying amplitude damping channel (TVAD) is typically used to describe the noise that manifests when working with $T_1$-limited qubits, such as those of \cite{decoherenceBenchmarking}, since the pure dephasing effects suffered by this type of qubits are negligible (the Ramsey limit, $T_2\approx 2T_1$ is saturated). The following theorem shows how the quantum channel capacity coincides with the ergodic capacity for this class of FTVQCs.
\begin{thm}[Quantum capacity of amplitude damping FTVQCs]
\textit{The quantum capacity of the amplitude damping FTVQC is equal to its ergodic quantum channel capacity
\begin{equation}\label{eq:CergAD}
\begin{split}
C_\mathrm{Q}(\bar{\gamma})=C_\mathrm{Q}^\mathrm{erg}(\bar{\gamma}) &= \mathrm{E}\{Q_\mathrm{coh}(\omega)\} \\&= \int Q_\mathrm{coh}(\gamma) p_{\gamma}(\gamma) d\gamma \\
&= \int Q_\mathrm{coh}(t_\mathrm{algo},t_1)p_{T_1}(t_1) dt_1,
\end{split}
\end{equation}
where $Q_\mathrm{coh}$ refers to the channel coherent information, operator $\mathrm{E}\{\cdot\}$ is the mean, and $\bar{\gamma}$ refers to the damping probabilities defined by the mean relaxation time.}
\end{thm}
\begin{proof}
We need to prove that the bound $C_\mathrm{Q}(\mathcal{N}) \geq C_\mathrm{Q}^\mathrm{erg}(\mathcal{N})$ is actually an equality. To do so, we must prove that the channel coherent information in \eqref{k3} is actually additive for the family of FTVAD channels. The static amplitude damping channel exhibits degradability for damping parameters $\gamma\in [0,1/2]$ and antidegradability otherwise, i.e. $\gamma\in [1/2,1]$. In this way, the tensor product of the limit
\begin{equation}
\lim_{n\rightarrow \infty}\frac{1}{n}Q_\mathrm{coh}\left(\bigotimes_{j=1}^n \mathcal{N}(\rho,t_1^{m^j},t=t_\mathrm{algo})\right),
\end{equation}
will consist of some amplitude damping channels that are degradable and some others that are anti-degradable. In this sense, additivity of the channel coherent information is assured for such a combination of channels:
\begin{itemize}
\item Degradable channels fulfill additivity, i.e. $Q_\mathrm{coh}(\mathcal{N}\otimes\mathcal{M})= Q_\mathrm{coh}(\mathcal{N})+Q_\mathrm{coh}(\mathcal{M})$, when both $\mathcal{N}$ and $\mathcal{M}$ are degradable \cite{wildeQIT}.
\item Degradable and anti-degradable channels fulfill additivity, i.e. $Q_\mathrm{coh}(\mathcal{N}\otimes\mathcal{M}) = Q_\mathrm{coh}(\mathcal{N})+Q_\mathrm{coh}(\mathcal{M})$, when $\mathcal{N}$ is degradable and $\mathcal{M}$ is anti-degradable \cite{bothAdd,degantidegAdd}.
\item Anti-degradable channels fulfill additivity, i.e. $Q_\mathrm{coh}(\mathcal{N}\otimes\mathcal{M})= Q_\mathrm{coh}(\mathcal{N})+Q_\mathrm{coh}(\mathcal{M})$, when both $\mathcal{N}$ and $\mathcal{M}$ are anti-degradable \cite{bothAdd,antiantiAdd}.
\end{itemize}

Consequently, the following holds for FTVAD channels
\begin{equation}\label{eq:proofADequal}
\begin{split}
&\lim_{n\rightarrow \infty}\frac{1}{n}Q_\mathrm{coh}\left(\bigotimes_{j=1}^n \mathcal{N}(\rho,t_1^{m^j},t=t_\mathrm{algo})\right) \\& = \lim_{n\rightarrow \infty}\frac{1}{n}\sum_{k=1}^n Q_\mathrm{coh}\left(\mathcal{N}(\rho,t_1^{m^j},t=t_\mathrm{algo})\right).
\end{split}
\end{equation}

Finally, considering the discussion from before (See equation \eqref{eq:proofConv}), we can conclude that the quantum capacity is actually equal to the ergodic quantum capacity for FTVAD channels:

\begin{equation}\label{eq:CergAD2}
\begin{split}
C_\mathrm{Q}(\bar{\gamma})=C_\mathrm{Q}^\mathrm{erg}(\bar{\gamma}) &= \mathrm{E}\{Q_\mathrm{coh}(\omega)\} \\&= \int Q_\mathrm{coh}(\gamma) p_{\gamma}(\gamma) d\gamma \\
&= \int Q_\mathrm{coh}(t_\mathrm{algo},t_1)p_{T_1}(t_1) dt_1,
\end{split}
\end{equation}
where in the last step we use the fact that the coherent information for AD channels is a function of the relaxation time $T_1$, which will be the random variable, as well as of the error correction cycle time (algorithm time), $t_\mathrm{algo}$. $\bar{\gamma}$ refers to the damping probability defined by the mean relaxation time, $\mu_{T_1}$ for the cycle time in consideration.

\end{proof}

Figure \ref{fig:CergAD} shows the quantum capacity of the fast TVAD as a function of the coefficient of variation of the relaxation time. In this figure, it can be observed that the capacity of the channel changes as a function of the coefficient of variation of the relaxation time. For low coefficients of variation ($\approx 1\%$), the difference between the capacity of the FTVAD channel and the quantum capacity of the static AD channel is negligible. When the coefficient of variation increases to about $c_\mathrm{v}\approx 20\%$ differences in the capacity of the FTVAD channel and the quantum capacity of the static AD channel become greater. Note also that for very noisy scenarios $\bar{\gamma}>0.35$, the quantum capacity of the FTVAD channel is higher than the quantum capacity of the static AD channel. This is a consequence of the fact that even if the mean value of the damping probability is the anti-degradable region ($\bar{\gamma}\in (1/2,1]$) of the AD channel, the fluctuations of the parameters imply that some of the actual $\gamma$ will lay in the degradable region \textbf{($\gamma\in [0,1/2]$)}.

We wrap up this discussion by analyzing the capacity of the twirled approximated versions of the TVAPD and TVAD channels. Similarly to the combined amplitude and phase damping channel, the quantum capacity of the static versions of these channels is not known, as they belong to the family of Pauli channels, which are not degradable. Actually, these channels have been proven to have strictly superadditive coherent information for some very noisy scenarios \cite{wildeQIT} which makes it impossible to reduce the calculation of the quantum capacity to a single-key formula. For this reason, the so-called hashing bound (the hashing bound is equivalent to bounding the capacity with the coherent information of the channel) is used as a good lower bound on the capacity of the aforementioned channels \cite{wildeQIT}. Thus, we can derive a lower bound for the ergodic quantum capacity of the fast time-varying twirled approximated channels as
\begin{equation}\label{eq:ChashergAD}
\begin{split}
C_\mathrm{Q}(\mathbf{\bar{p}}) &\geq C_\mathrm{H}^\mathrm{erg}(\mathbf{\bar{p}}) \\&= \mathrm{E}\{C_\mathrm{H}(\omega)\} \\&= \int C_\mathrm{H}(\mathbf{p}) p_{\mathbf{p}}(\mathbf{p}) d\mathbf{p} \\
&= \int\int C_\mathrm{H}(t_\mathrm{algo},t_1,t_2)p_{T_1,T_2}(t_1,t_2) dt_1 dt_2,
\end{split}
\end{equation}
where $\mathbf{p}$ refers to the array $(p_\mathrm{x},p_\mathrm{y},p_\mathrm{z})$ and $\mathbf{\bar{p}}$ refers to a similar vector derived considering the mean relaxation and dephasing times. Note that here $C_\mathrm{H}(\mathbf{\bar{p}}) = Q_\mathrm{coh}(\mathbf{\bar{p}})$. The expression given in \eqref{eq:ChashergAD} is reduced to a single integral that depends only on the relaxation time if the twirled versions of the amplitude damping channel are considered. We will refer to this lower bound as the ergodic hashing bound, $C_\mathrm{H}^\mathrm{erg}$, in order to be consistent with the terminology used for static Pauli channels. Figure \ref{fig:CerghashAD} presents the ergodic hashing bounds of the twirled approximated channels of the amplitude damping channel. These results show how the ergodic hashing bounds deviate from the static hashing bound when the coefficient of variation of the relaxation time increases and how both metrics coincide when $c_\mathrm{v}(T_1)\approx 1\%$.

\subsection{Performance of QECCs over FTVQCs}\label{sub:fastQECCs}
The performance of QECCs has been shown to worsen significantly when operating over slow time-varying quantum channels \cite{TVQC,outage}. In this section, we present the outcomes of simulations we have conducted to study the performance of QTCs and planar codes when their qubits are subjected to the effects of the FTVQC (See Appendix \ref{appB} for the details of the Monte Carlo numerical simulations). To conduct the simulations, we use slow and fast time-varying amplitude damping Clifford twirl approximations, STVADCTA and FTVADCTA respectively, whose static counterpart is the depolarizing channel. The asymptotic limits for error correction associated to these noise operations will be the hashing bound and the ergodic hashing bound, respectively. In our simulations we consider a coefficient of variation of $c_\mathrm{v}(T_1)=25\%$, a value that is typical in experimental superconducting qubits \cite{decoherenceBenchmarking,TVQC}.

Figure \ref{fig:FFQTC} shows the simulation outcomes obtained for the rate $1/9$ quantum turbo code when it operates over static channels, STVQCs and FTVQCs.

\begin{figure}[!ht]
\centering
\includegraphics[width=\linewidth]{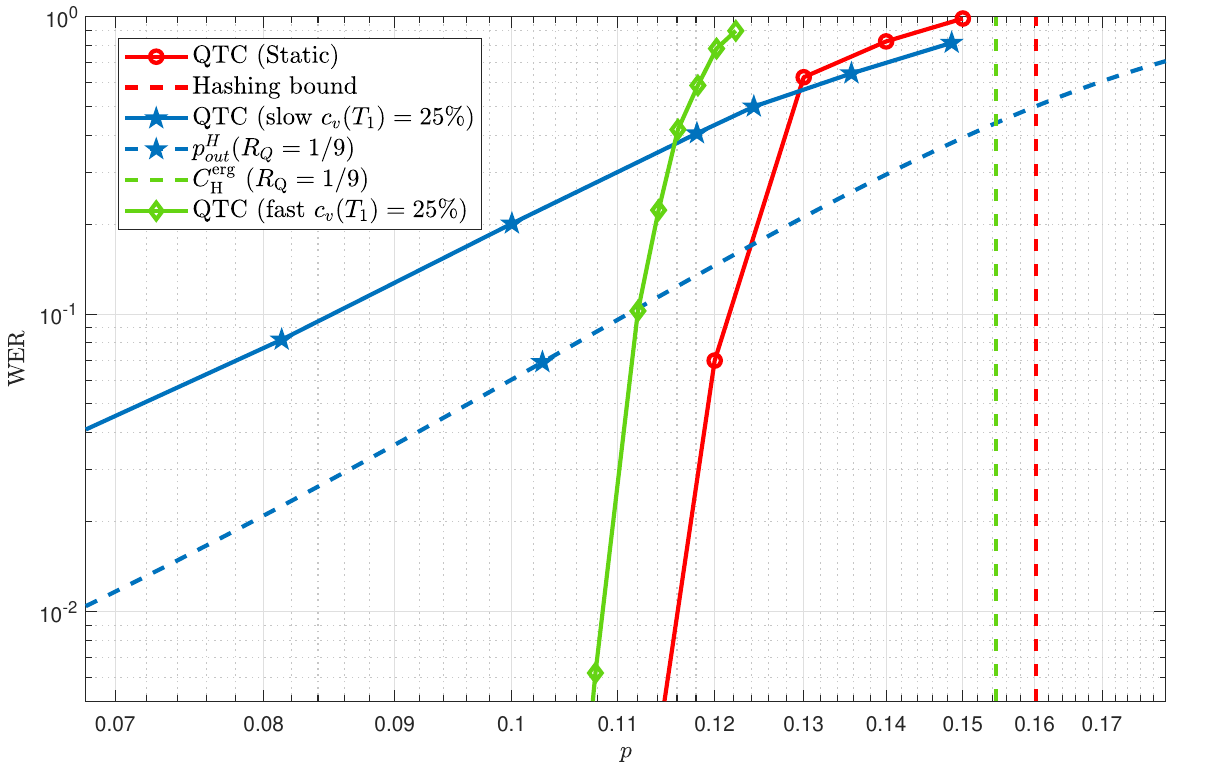}
\caption{\textbf{QTC operating over static, STVQC and FTVQC.} The coefficient of variation considered is $c_\mathrm{v}(T_1)=25\%$, which is a value that is typical in superconducting qubits \cite{decoherenceBenchmarking,TVQC}. \textbf{a} (red) static performance (solid) and hashing bound (dashed), $p^*=0.1602$ \textbf{b} (blue) STVQC performance (solid) and quantum outage probability \cite{outage} (dashed) \textbf{c} (green) FTVQC performance (solid) and ergodic hashing bound (dashed), $p^*_{\mathrm{erg}}=0.1545$.}
\label{fig:FFQTC}
\end{figure}

The results in Figure \ref{fig:FFQTC} show how the performance of the QTC, assessed in terms of of the Word Error Rate (WER), is worse over the FTVQC than over its static couterparts. In fact, as seen by comparing the hashing bound to the ergodic hashing bound, the loss in code performance is similar to the loss in quantum capacity. It should be noted that the flattening of the performance curve that QTCs suffered over STVQCs is not observed over FTVQCs. In fact, when comparing the performance curves obtained for the FTVQC channels and those derived for the static channels, the only difference is that the operation point of the code, the point where the waterfall region starts, is worse (the curves themselves have the same shape). Thus, even though the fluctuations of the decoherence parameters also worsen the performance of the code, this loss is much less significant if the fluctuations are local to each qubit of the system (as is the case with the qubits that are considered in this paper) than if they are completely correlated.

To provide further context, we have also studied the performance of $d\in\{3,5,7,9\}$ planar codes over the fast multi-qubit TVQC proposed in this paper. These results, along with the performance of the planar codes over static channels, are presented in Figure \ref{fig:FFplanar}. For the sake of clarity, in this figure we have omitted the performance results of these codes when they operate over STVQCs \cite{TVQC}.

\begin{figure}[!ht]
\centering
\includegraphics[width=\linewidth]{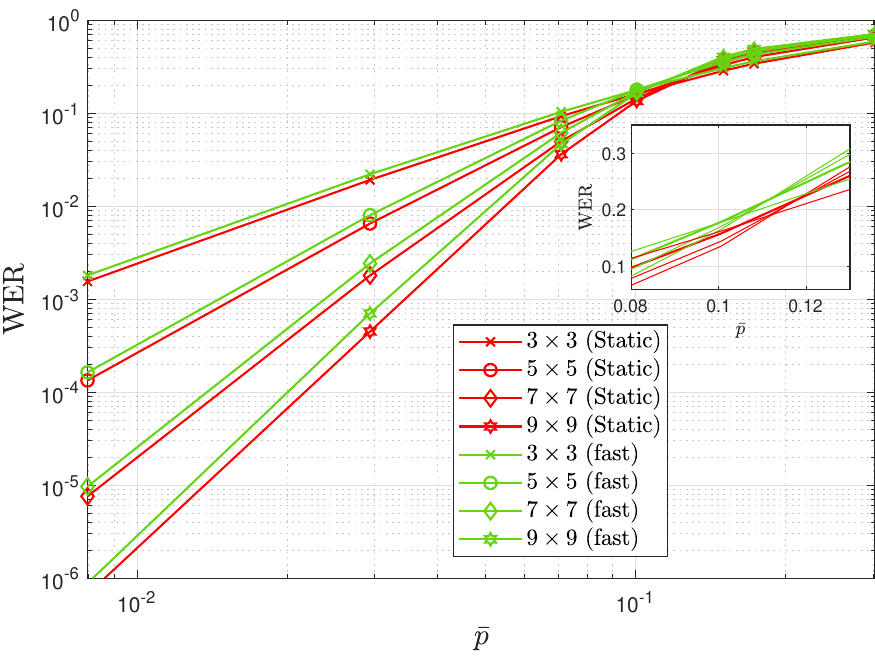}
\caption{\textbf{Planar codes operating over FTVQCs and static channels.} The coefficient of variation considered is $c_\mathrm{v}(T_1)=25\%$, which is a value that is typical in superconducting qubits \cite{decoherenceBenchmarking,TVQC}. A zoom to the code threshold region is also presented. \textbf{a} (red) Planar codes over static noise \textbf{b} (green) Planar codes over FTVQCs.}
\label{fig:FFplanar}
\end{figure}

Figure \ref{fig:FFplanar} shows how, in a similar manner to what was observed for QTCs, the performance of the planar codes is slightly worse over the FTVQCs than when considering time-invariant noise. Notably, this decrease in performance is less significant than for QTCs, but this is reasonable, since the performance of these planar codes over the static scenario is much worse than that of the turbo codes in this same context. Against this backdrop, the performance of planar codes can be said to deteriorate when the fluctuations of the decoherence parameters are considered. However, if the fluctuations are local to each of the qubits of the system, the loss in performance will not be catastrophic (a phenomenon that does actually occur over the STVQCs \cite{TVQC}).

Generally, surface codes are benchmarked based on a metric known as the code threshold, $p_\mathrm{th}$, which is defined as the physical error probability below which increasing the code distance actually implies lowering the $\mathrm{WER}$. In this sense, the surface code threshold is a capacity-like metric (the fact that operating over noise levels above the threshold does not make sense from the coding point of view is reminiscent of operating at rates higher than the capacity) that serves to assess the true error correction potential of this family of codes. The close-up image shown on the top-right of Figure \ref{fig:FFplanar} presents the threshold for planar codes both for static and fast time-varying scenarios. It can be observed that the planar code threshold is $p_\mathrm{th}\approx 0.112$ when the static noise model is considered, while it degrades to $p_\mathrm{th}\approx 0.105$ when the codes operate over the FTVADCTA with $c_\mathrm{v}(T_1)=25\%$. This slight degradation of the code threshold is in line with the slight decrease in quantum capacity discussed previously.

\section{Discussion}
% hay que comntar algo de correlacion finita HMMs o algo asi DONE
% correlation of the fluctutions is crucial for no having flattened curves (qubit coupling) DONE
% NISQ and mitigation DONE
% also say that whenever APD considered the loss in capacity will be higher (APD Q is not known yet)
% in the discussion maybe state that for other channels with similar statistics one can get the ergodic capacity too

In this paper, we have discussed the way multi-qubit time-varying quantum channels are constructed. In previous works, the fluctuations of the qubit decoherence parameters that describe multi-qubit TVQCs were assumed to be qubit-wise perfectly correlated in a block, and independent from block-to-block \cite{TVQC,outage}. However, recent experimental results have shown that the fluctuations of the $T_1$ and $T_2$ parameters of superconducting qubits are caused by the coupling of the qubit with an ensemble of enviromental unstable near-resonant two-level systems that arise from atomic-scale defects in the Josephson Junctions that make up these supeconducting qubits \cite{decoherenceBenchmarking,klimov,fluctAPS,SchlorPhD,TLSdefects,TLSphysrevB}. This means that the origin of the decoherence parameter fluctuations is local to each particular superconducting qubit in the system, granted that there is no coupling between the defects. The fact that the individual-qubit decoherence parameter fluctuations are uncorrelated was proven in \cite{decoherenceBenchmarking} for a system made up of $2$ superconducting qubits. In this work we extend this study to quantum hardware made up of $5$ superconducting qubits by concurrently and repeatedly measuring the relaxation times of the qubits. To do so, we have used the quantum processors ibmq\_quito, ibmq\_belem, ibmq\_lima, ibmq\_santiago and ibmq\_bogota, which can be accessed through the cloud using the IBM Quantum Lab platform (See Appendix \ref{appC} for details about the experiments). We have studied the correlation that exists between various measurements of $T_1$ values for each of the qubits in different operational scenarios. Our results show that for most of these scenarios the correlation coefficients are not significant enough to indicate any correlation. Thus, we have provided further evidence to support the claims that fluctuations of the decoherence parameters are local to each of the superconducting qubits for the superconducting quantum hardware considered here. Based on previous research and our own results, it is clear that considering the fluctuations of the decoherence parameters of superconducting qubits to be local to each particular qubit of the system is a reasonable assumption.

Earlier, we also discussed different ways of constructing multi-qubit time-varying quantum channels depending on the qubit-wise correlation of the fluctuations of $T_1$ and $T_2$. We analyzed the mathematical formalism of the previously considered multi-qubit TVQCs \cite{TVQC,outage}, which we named slow TVQCs due to their similarity with classical slow fading channels, and we proposed fast TVQCs as the appropriate way to construct multi-qubit decoherence models with parameter fluctuations that are local to each qubit. We saw how this phenomenon of local fluctuations is present in superconducting hardware. Moreover, and due to their similarity with classical fading channels, we discussed the quantum capacity of FTVQCs, and proposed the ergodic quantum capacity as a lower bound on the asymptotically achievable error rate for QEC for these noise models. Moreover, we proved that the ergodic quantum capacity coincides with the quantum capacity for the family of fast time-varying amplitude damping channels. We computed the ergodic quantum capacity numerically for fast time-varying amplitude damping channels and concluded that the loss in capacity caused by decoherence parameter fluctuations is small, similar to what happens in classical fast fading. Finally, we discussed the performance of quantum error correction codes when the noise operation is defined as a FTVQC by conducting numerical simulations of the performance of quantum turbo codes and planar codes. Interestingly, our results indicated that the word error rate of the codes worsens only slightly, similar to what happens to the quantum capacity, and in stark contrast to the drastic deterioration that the performance of these codes suffers over the previously considered STVQCs, where their WER curves flatten substantially. In summary, when operating over FTVQCs the shape of the performance curve of a QECC is the same as in the conventional noise model in the literature (all qubits suffer the same static noise), but its operating point is defined for lower physical error probabilities than for the static case. We also studied the threshold of the considered planar codes and we observed how it was slightly lowered over FTVQCs, similar to what happens to the quantum capacity.

Another important matter that we discussed is that understanding the nature of the qubit-to-qubit relationship of the fluctuations of the decoherence parameters of superconducting qubits is vital in order to characterize the real quantum noise that affects the quantum information that they encode. The slow TVQCs that have been presented in the literature \cite{TVQC,outage} predict a substantial QECC performance loss due to the time-varying behaviour of $T_1$ and $T_2$. However, we have seen that a silver lining exists when building QEC codes in superconducting hardware. Although superconducting qubits present substantial ($c_\mathrm{v} \gtrapprox 20\%$) parameter fluctuations, these fluctuations are local to each of the elements of the processor and, thus, are uncorrelated between qubits. While it is true that a slight performance loss is inevitable, it is significantly milder than the deterioration predicted by the STVQCs which assume that the fluctuations are fully qubit-wise correlated for a QEC block. To provide specific examples, the QTC considered in this article operates over FTVQCs at a physical error rate $\approx 7\%$ lower than over static channels for a $\mathrm{WER} = 10^{-2}$, and the threshold of the planar codes will be $\approx 6\%$ lower over FTVQCs than over static channels. These losses, despite being important, are much less restrictive than the flattening effect that takes place when fully correlated fluctuations are considered \cite{TVQC,outage}. These results imply that constructing qubits whose qubit-to-TLS defect interactions are local and uncorrelated with other qubits is critical to maintaining the excellent performance of QECCs when the decoherence parameters of the superconducting qubits fluctuate over time. The coupling between qubits, and the TLS defects themselves, are directly dependent on the architecture of each particular quantum chip. Thus, the results we present herein prove that experimentalists must consider these effects when designing hardware, as they will play a pivotal role in suppressing the amount of errors the hardware suffers. It must also be mentioned that minimizing the decoherence parameter fluctuations of superconducting qubits will also be important to obtain error correction codes that perform as well as it would be expected based on the results obtained using the static noise model prevalent in the literature. Another important research topic is to study the time-varying behaviour of the decoherence parameters that other qubit technologies (such as trapped ions, NV center qubits or silicon spin qubits) experience. This would allow us to understand if their noise dynamics are more accurately described by the time-varying quantum channels discussed in this article or by the traditional static noise models.

It is also critical to further study the fluctuations of the decoherence parameters that superconducting qubits exhibit. It must be noted that interest in these effects has recently picked up, especially in the experimental research community \cite{decoherenceBenchmarking,klimov,fluctAPS,SchlorPhD,TLSdefects,TLSphysrevB}. Nonetheless, more work on this topic is needed to completely understand the time-fluctuating behaviour of superconducting qubits and its causes. This will enable the creation of an accurate theoretical model and, ultimately, to mitigate the impact of time-dependant noise on quantum information. For example, in this article, we have studied the correlations that exist between the fluctuations of the qubits of some $5$ qubit quantum processors and we have concluded that there is no correlation significant enough to warrant the application of slow multi-qubit time-varying noise models. However, this might not be the case for other quantum processors that might have other architectures or may be comprised of more qubits. Note that the absorption of high energy particles can also generate correlated errors in superconducting devices \cite{gamma1,gamma2,gamma3}. Additionally, we must also mention that the objective of our experiments was not to obtain very accurate values of the correlation coefficients. Instead, our goal was to verify that, for the typical case, assuming that fluctuations are qubit-to-qubit uncorrelated is grounded on and backed up by experimental results (note that although the obtained $95\%$ confidence intervals are wide, they all contain negligible correlation values and prove that fluctuations are qubit-to-qubit uncorrelated). In any case, more accurate experiments on this topic should be conducted, as they will serve to better understand the dynamics of multi-qubit superconducting quantum processors. If significant correlation values are observed in future experiments, it may become necessary to invoke TVQCs with finite correlation (perhaps by modelling these events with Hidden Markov Models) to better represent multi-qubit time-varying noise. Another phenomenon that must be further explored is the sharp transition of the relaxation times that our results show for particular qubits (See Appendix \ref{appC}). This behaviour must be better understood and possibly included in the noise models if it is shown to be common for superconducting qubits. Furthermore, note that we have considered that the decoherence parameters of all the qubits have the same mean and standard deviation for the derivation of the capacity and the numerical simulations. This assumption was necessary to discuss the topics that comprise this paper, but the most accurate FTVQC for a multi-qubit superconducting channel will most likely have a set of parameters for each of the qubits of the processor (at least for the NISQ devices that exist at the time of writing).

Even though we have only studied the impact of parameter fluctuations from the point of view of quantum error correction, the time-varying decoherence models discussed in this article will also have an impact on near term quantum algorithm implementations on superconducting NISQ devices and on the error mitigation techniques used to ``clean'' the outcomes of NISQ devices. When implementing quantum algorithms in NISQ processors, the physical qubits that execute the operations of these algorithms will be subjected to decoherence (as well as gate and measurement errors) that will corrupt the desired outcomes. So far we have seen how the decoherence mechanisms that affect superconducting qubits are better described by FTVQCs than by STVQCs or static channels. One way to obtain the ``best'' version of quantum algorithms in the presence of fast varying quantum noise might be to allocate their resources (qubits) as functions of the noise itself (some qubits might be able to perform longer-lasting tasks than others). This might allow quantum software developers to determine exactly what limits the algorithm (how many gates can be applied before there is too much noise). Another possibility might be to apply error mitigation techniques based on the calibration of the device in order to post-process results and derive more accurate outputs than the raw yield of the NISQ device. However, these techniques and simulations strongly depend on the calibration data of the system. For this reason, fluctuations of decoherence parameters must also be taken into account for all these design and simulation tasks if they are be run in NISQ devices based on superconducting technologies. In this way, it will become possible to build better mitigation techniques and more noise resilient NISQ algorithms.

Regarding quantum information theory, additional work is still required to fully understand the behaviour of the quantum capacity of the channels proposed in this article. Although we have been able to prove that the ergodic quantum capacity is actually the quantum capacity for the fast time-varying amplitude damping channels, we have just lower bounded this quantity for the more general fast time-varying combined amplitude and phase damping channel that includes non-negligible pure dephasing channels. Since most of the superconducting qubits that exist in the literature do not saturate the Ramsey limit (the Ramsey limit refers to the scenario $T_2 \approx 2T_1$), their dynamics (including pure dephasing) are described by the latter channel. However, as mentioned previously, it is not known if the coherent information of the combined amplitude and phase damping channel is additive. Thus, it may be that this quantity is superadditive, which implies that our understanding of this topic should increase before the capacity of the FTVQC version of this channel can be studied. This is especially relevant, since including the fluctuations of the pure dephasing time in a quantum noise model will worsen the capacity more than when only $T_1$ is considered. Additionally, the time-varying quantum channel models discussed herein might also be adapted to other channels (aside from the family of amplitude damping and dephasing channels considered in this article) if the noise parameters that define them also present behaviour similar to the relaxation and dephasing times of the superconducting qubits.

All in all, we consider that the FTVQC model proposed in this article describes the noise suffered by superconducting multi-qubit systems more accurately than previously considered noise models, at least for the hardware considered in this article. This claim is backed up by the experiments we have conducted and by previous literature on qubit-to-TLS defect interactions. Consequently, we expect that quantum error correction codes implemented in these types of quantum processors will perform worse than what would be expected based on results obtained for static quantum channels. Once more, it is necessary for more research on the topic of decoherence parameter fluctuation and its incorporation to the decoherence models to be conducted in order to unveil the true performance of near- and long-term quantum error correction codes as well as NISQ algorithms and error mitigation techniques.

\section*{Acknowledgements}
We acknowledge the use of IBM Quantum services for this work. The views expressed are those of the authors, and do not reflect the official policy or position of IBM or the IBM Quantum team. We also want to acknowledge Luis Vitores and I\~nigo Barasoain for useful discussions regarding hypothesis testing and bootstrapping. We also want to acknowledge Bruno Sunsundegi for numerically studying the $3\times 3$ planar code operating over the FTVQC as part of his bachelor's thesis.

This work was supported by the Spanish Ministry of Economy and Competitiveness through the ADELE project (Grant No. PID2019-104958RB-C44), by the Spanish Ministry of Science and Innovation through the proyect Few-qubit quantum hardware, algorithms and codes, on photonic and solid-state systems (PLEC2021-008251), by the Diputación Foral de Gipuzkoa through the DECALOQC Project No. E 190 / 2021 (ES), by the Ministry of Economic Affairs and Digital Transformation of the Spanish Government through the QUANTUM ENIA project call - QUANTUM SPAIN project, and by the European Union through the Recovery, Transformation and Resilience Plan - NextGenerationEU within the framework of the Digital Spain 2025 Agenda. This work was funded in part by NSF Award No. CCF-2007689.

\section*{Author Contributions}
J.E.M. conceived the research. J.E.M., P.F. and J.R.F. discussed and implemented the experiments in the IBM machines. J.E.M. constructed the models. J.E.M., P.M.C. and J.G.-F. proposed the asymptotical limits. J.E.M. and A.dM.iO. performed the numerical simulations. J.E.M., P.F. and A.deM.iO. analyzed the results and drew the conclusions. The manuscript was written by J.E.M., P.F. and A.deM.iO., and revised by P.M.C., J.G.-F and J.R.F. The project was supervised by J.E.M., P.M.C., J.G.-F. and J.R.F.

\appendix
\section{Pearson correlation tests}\label{appA}

In section \ref{res:exp} we studied the statistical dependence between the fluctuations of the relaxation times of the qubits of various different IBM quantum processors. The objective of this analysis was to determine if the aforementioned fluctuations were local to each particular qubit. To do so we used the Pearson correlation coefficient, as it provides a measure of the correlation that exists between measured sequences \cite{PearsonBounds,strengthPear,freedman}. For a pair of random variables $(\mathrm{X},\mathrm{Y})$, the sample Pearson correlation coefficient, $r_{\mathrm{XY}}$, is defined as  \cite{PearsonBounds}
\begin{equation}\label{eq:pearson}
\begin{split}
r_{\mathrm{XY}} &= \frac{\mathrm{cov}(\mathrm{X},\mathrm{Y})}{\sigma_\mathrm{X} \sigma_\mathrm{Y}} \\&= \frac{n\sum_{i=1}^{n}x_iy_i - \left(\sum_{i=1}^n x_i\right)\left(\sum_{i=1}^n y_i\right)}{\sqrt{n\sum_{i=1}^{n}x_i^2 - \left(\sum_{i=1}^n x_i\right)^2} \sqrt{n\sum_{i=1}^{n}y_i^2 - \left(\sum_{i=1}^n y_i\right)^2}},
\end{split}
\end{equation}
where $\mathrm{cov}(\mathrm{X},\mathrm{Y})$ denotes the covariance and $\sigma$ refers to the standard deviation.

In the main text, we calculated the correlation coefficient of the measured qubit relaxation times for each of the studied quantum processors. To ensure the statistical significance of the obtained correlation coefficients, we calculate the $95\%$ confidence intervals via bootstrapping \cite{bootstrap}. Bootstrapping is a method that uses random resampling and replacement of samples to mimic the original population from which the samples were extracted. The bootstrap probability distribution can then be used to derive a significance confidence interval. For the analysis we have conducted in this paper, we obtain the $95\%$ confidence interval as the interval that encompasses the $2.5^{th}$ and the $97.5^{th}$ percentiles of the resampled Pearson correlation coefficient values. In this way, we can be $95\%$ confident that the correlation parameter that exists between those random variables will fall within said confidence interval. Confidence intervals can also be used to reject or retain the null hypothesis of a hypothesis test \cite{CIhypothesis}. Note that here we are actually performing a hypothesis test to determine if the variables are fully correlated (null hypothesis) or if they show some degree of correlation (alternate hypothesis). Thus, if the null hypothesis lays in the derived confidence interval, we cannot exclude it as being the population parameter at the chosen level of confidence.

The final component of the statistical dependence analysis we conduct herein is an accurate interpretation of the obtained values. This allows us to determine if significant correlations exist or not. The ranges of values for which two parameters might be strongly correlated depends on the actual problem (field). However, for physical sciences, there should be no doubt about the dependence between two variables, implying that strong or significant correlation values should be high ($|r_{\mathrm{XY}}| \gtrapprox 0.9$) \cite{PearsonBounds,strengthPear}. For low values of $r_{\mathrm{XY}}$, no considerable dependence (weak) relationship can be concluded. Specifically, for classical fading channels, whenever the spatial correlation coefficient values are approximately lower than $0.6$, it is often assumed that the fading gains of the Rayleigh channel are i.i.d. since both channels will be very similar, i.e., the correlation is negligible \cite{classicalCorr1,classicalCorr2,classicalCorr3}.

\section{QECC numerical simulation}\label{appB}
Monte Carlo computer simulations of the $d\times d$ planar codes with $d\in\{3,5,7,9\}$ \cite{surfaceFowler,surface} and of the QTC of rate $1/9$ in \cite{josu2} have been carried out to estimate changes in their performance over various different operational scenarios. Planar codes belong to the more general family of surface codes \cite{surfaceFowler,surface} and are $[[d^2+(d-1)^2,1,d]]$ QECCs defined by the grid length of the code $d$. A blocklength of $k=1000$ logical qubits has been selected for the QTC, as in \cite{josu,josu2}.

Planar codes are decoded using a Minimum Weight Perfect Matching (MWPM) decoder, which is implemented using the QECSIM tool \cite{surface}. The QTC is decoded via the decoding algorithm presented in \cite{QTC,EAQTC}, which combines two Soft-In Soft-Out (SISO) decoders.

Each round of the numerical simulation is performed by generating an $N$-qubit Pauli operator, calculating its associated syndrome, and finally running the decoding algorithm using the syndrome as its input. Once the logical error is estimated, it is compared with the channel error in order to decide if the decoding round was succesful. The operational figure of merit we use to evaluate the performance of these quantum error correction schemes is the Word Error Rate ($\mathrm{WER}$). The WER represents the probability that at least one qubit of the received block has been incorrectly decoded.

For the numerical Monte Carlo methods employed to estimate the $\mathrm{WER}$ of the Kitaev toric codes and the QTC, we have applied the following rule of thumb to select the number of blocks to be transmitted, $N_{\mathrm{blocks}}$ \cite{josu,josu2}, as
\begin{equation}
N_{\mathrm{blocks}} = \frac{100}{\mathrm{WER}}.
\end{equation}
As explained in \cite{josu,josu2}, under the assumption that the observed error events are independent, this results in a $95\%$ confidence interval of about $(0.8\hat{\mathrm{WER}} , 1.25\hat{\mathrm{WER}})$, where $\hat{\mathrm{WER}}$ refers to the empirically estimated value for the $\mathrm{WER}$.

\section{Intra-calibration decoherence parameter fluctuation for the qubits of IBM quantum hardware}\label{appC}
In the main article, we discussed the fact that the fluctuations of the decoherence parameters of superconducting qubits are local, that is, the random variables $T_1(\omega)$ and $T_2(\omega)$ are qubit-wise uncorrelated. In \cite{decoherenceBenchmarking}, the authors proved this to be true for the relaxation time fluctuations by using their two-qubit superconducting system. In this appendix, we perform a similar analysis for five IBM $5$-qubit superconducting processors that are accessible online: ibmq\_quito, ibmq\_belem, ibmq\_lima, ibmq\_santiago and ibmq\_bogota \cite{IBMqexp}.

Qubit relaxation time, $T_1$, refers to the characteristic timescale at which a qubit in an excited state, $|1\rangle$, decays to its ground state, $|0\rangle$, caused by simple spontaneous emission. Consequently, the experiment that is usually performed in order to estimate the parameter $T_1$ of a qubit consists in collecting the statistics of the decay curve for the probability of measuring the excited state over time, $P_1(t)$. This is done by choosing a set of delay times $t_1,\cdots,t_n$ and then repeating the following protocol $N$ times for each of them \cite{experiments}:
\begin{itemize}
\item Prepare the qubit in the $|1\rangle$ state. This is usually done by exciting the qubit in the ground state via a Pauli $\mathrm{X}$ operator.
\item Wait a delay time, $t_j$.
\item Measure the qubit in the computational basis ($|0\rangle,|1\rangle$).
\end{itemize}

Once the decay curve, $P_1(t)$, is obtained, a fit to exponential decay is performed in order to estimate the value of the qubit relaxation time \cite{experiments}. As explained in the main text, we are interested in studying the locality of the fluctuations of the relaxation times of some of the IBM $5$-qubit quantum processors. Thus, we run the previously presented experiment over time, simultaneously for the all the qubits of the systems in question. Figure \ref{fig:exps} portrays the schematic of each of the experiments that we have conducted. We run each experiment $4000$ times ($4000$ shots) for $20$ uniformly separated delay times starting from a delay of $t_1=1\mu s$ to $t_{20}=2*T_1^{cal(i)}$, where by $T_1^{cal(i)}$ we refer to the relaxation time for the $i^{th}$ qubit provided by IBM for the specific calibration cycle in which the experiments are done. Note that the calibration data provided by IBM refers to measurements performed during that precise calibration round. However, these values will actually fluctuate within the calibration cycle itself, similarly to the superconducting qubits of \cite{decoherenceBenchmarking}. These intra-calibration fluctuations of the relaxation parameter are precisely what we are interested in observing.

\begin{figure}[h]
\centering
\leavevmode
\Qcircuit @C=1em @R=0.5em @!{
& \lstick{\ket{0}} &  \gate{X} & \gate{\mathrm{Delay}(t_j)} & \meter  \\
& \lstick{\ket{0}} &  \gate{X} & \gate{\mathrm{Delay}(t_j)} & \meter  \\
& \lstick{\ket{0}} &  \gate{X} & \gate{\mathrm{Delay}(t_j)} & \meter  \\
& \lstick{\ket{0}} &  \gate{X} & \gate{\mathrm{Delay}(t_j)} & \meter  \\
& \lstick{\ket{0}} &  \gate{X} & \gate{\mathrm{Delay}(t_j)} & \meter  \\
}
\caption{Schematic representation of each of the experiments done in the IBM quantum processors in order to estimate the decay curve of each of the qubits.}
\label{fig:exps}
\end{figure}
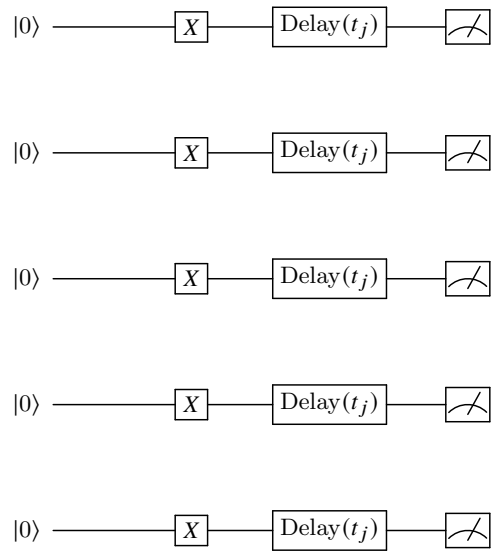

As discussed in the article, we have conducted these intra-calibration $T_1$ measurements for the ibmq\_quito, ibmq\_belem, ibmq\_lima, ibmq\_santiago and ibmq\_bogota $5$ qubit quantum processors on different days. In Table \ref{tab:experimentsSpecs}, we detail the information related to each of the scenarios that we have tested. Note that the number of measurements for each of the considered scenarios/processors, as well as the duration of the experiments themselves, is different. The reason for this is that the IBM machines are calibrated at different times with different frequencies. We observed that the ibmq\_santiago and ibmq\_bogota, which are of the Falcon r4L type, are more frequently (several times each day) calibrated than the ibmq\_quito, ibmq\_belem and ibmq\_lima processors (calibrated once a day approximately), which belong to the Falcon r4T class. Since we are interested in the intra-calibration fluctuations, the duration of those cycles is something that must be accounted for, which leads to some scenarios having more $T_1$ measurements than others. Another limitation that must be disclosed is the fact that these machines can be accessed by the general public, resulting in queues and wait times to run the experiments. Consequently, whenever a large number of tasks are sent to a machine (high demand for the processor in question), our experiments will be more spread out in time (implying that less measurements will be made in these specific calibration cycles). To be more specific, we have been able to run most of the experiments on days where the demand for the IBM systems was low, approximately running the measurement of the $T_1$ of the $5$ qubits of the system once every two minutes (with the exception of the ibmq\_santiago processor where the experiment was run once every four minutes). 

\begin{table*}[]
\caption{Relaxation time measurement specifications for each of the scenarios. The table includes the timestamp of the beginning of the experiments and their duration, as well as the number of measurements and the calibration data provided by IBM. The calibration relaxation times are provided in $\mu s$ units.}
\centering
\begin{tabular}{|l|l|l|l|l|l|l|l|l|l|}
\hline
\textbf{Scenario} & \textbf{Timestamp} & \textbf{\# Meas.} & $T_1^{cal(0)}$ & $T_1^{cal(1)}$ & $T_1^{cal(2)}$ & $T_1^{cal(3)}$ & $T_1^{cal(4)}$   \\ \hline
ibmq\_quito S1  &    15/04 00:00 (4 hours)         &         $213$           &     $61.208$      &      $92.812$            &   $156.228$  &  $90.232$ &                     $119.197$ \\ \hline 
ibmq\_quito S2  &        27/04 12:06 (11 hours)      &         $401$      &      $76.5$      &  $109.8$    &  $93.71$   &   $143.49$ &  $57.27$     \\ \hline
 ibmq\_belem S1   &    08/04 13:55 (24 hours)          &         $999$         &    $120.88$      &       $103.85$          &  $94.75$ &  $88.54$  &                            $83.65$ \\ \hline
ibmq\_belem S2    &   26/04 15:39 (7 hours)           &       $401$          &     $67.4$     &   $77.56$              &   $83.35$    & $86.44$  &    $64.2$  \\ \hline
  ibmq\_lima S1  &     15/04 14:21 (27 hours)       &       $1000$         &      $136.87$      &      $125.03$    &  $108.45$    & $126.02$   & $21.45$    \\ \hline
  ibmq\_lima S2  &     25/04 10:25 (6 hours)         &       $239$          &     $122.99$       &     $100.58$           &  $73.9$    & $97.38$   &               $22.74$ \\ \hline
  ibmq\_santiago S1  &   18/04 01:26 (12 hours)        &     $200$           &  $157.41$    &    $138.6$    &     $143.18$       & $118.58$  &              $166.08$ \\ \hline
  ibmq\_bogota S1  &     17/04 21:21 (4 hours)     &       $200$           &     $123.06$      &      $187.98$         &   $175.79$       &  $217.003$   &  $168.35$\\ \hline
\end{tabular}
\label{tab:experimentsSpecs}
\end{table*}

It is important to mention that conducting the relaxation time measurements every few minutes is not actually a problem for this study. The stochastic processes that typically define the fluctuations of $T_1$ were previously studied in \cite{decoherenceBenchmarking}. This work concluded that the stochastic process coherence times (the time for which the stochastic process can be considered to be approximately constant), $T_c$, are typically in the order of minutes \cite{TVQC}. Consequently, the stochastic process can be modeled as a random variable, $T_1(\omega)$, that is considered to be constant for a time $T_c$. Since we are interested in studying the correlation that exists between the random variables of $T_1$ for each of the qubits of a processor, the measurements must be performed sufficiently apart in time so that they do not belong to the same stochastic coherence period. As discussed before, our measurements are conducted several minutes apart from each others; hence, they are consistent with this reasoning.

\begin{table*}[]
\caption{Sample mean relaxation times, $\hat{\mu}$, sample standard deviation, $\hat{\sigma}$, and the coefficients of variation, $c_\mathrm{v}=\sigma/\mu$, for each of the qubits of the considered scenarios. Qubit $j$ is labelled by $T_1^{(j)}$.}
\centering
\setlength{\tabcolsep}{2pt}
\renewcommand{\arraystretch}{1.5} % Default value: 1
\begin{tabular}{|l|l|l|l|l|l|}
\hline
\textbf{Scenario} & $c_\mathrm{v}(T_1^{(0)})\left(\frac{\hat{\sigma}_{T_1^{(0)}}}{\hat{\mu}_{T_1^{(0)}}}\right)$ & $c_\mathrm{v}(T_1^{(1)})\left(\frac{\hat{\sigma}_{T_1^{(1)}}}{\hat{\mu}_{T_1^{(1)}}}\right)$ & $c_\mathrm{v}(T_1^{(2)})\left(\frac{\hat{\sigma}_{T_1^{(2)}}}{\hat{\mu}_{T_1^{(2)}}}\right)$ & $c_\mathrm{v}(T_1^{(3)})\left(\frac{\hat{\sigma}_{T_1^{(3)}}}{\hat{\mu}_{T_1^{(3)}}}\right)$ & $c_\mathrm{v}(T_1^{(4)})\left(\frac{\hat{\sigma}_{T_1^{(4)}}}{\hat{\mu}_{T_1^{(4)}}}\right)$   \\ \hline
ibmq\_quito S1  &    $15.5\% \left(\frac{11.23}{72.51}\right)$        &         $19.96\% \left(\frac{11.21}{56.15}\right)$           &     $22.44\% \left(\frac{21.02}{93.63}\right)$  & $12.76\% \left(\frac{12.42}{97.33}\right)$   &  $7\% \left(\frac{1.73}{24.762}\right)$  \\ \hline 
ibmq\_quito S2  &       $30.2\% \left(\frac{17.63}{58.37}\right)$        &         $15.62\% \left(\frac{10.02}{64.16}\right)$        &      $24.43\% \left(\frac{16.23}{66.46}\right)$ &  $26\% \left(\frac{15.61}{60.02}\right)$   & $9.39\% \left(\frac{5.4}{57.14}\right)$  \\ \hline
 ibmq\_belem S1   &    $13.76\% \left(\frac{12.86}{93.5}\right)$           &         $12.32\% \left(\frac{11.2}{90.87}\right)$           &    $7.88\% \left(\frac{6.73}{85.4}\right)$ & $11.71\% \left(\frac{12.21}{104.22}\right)$    & $20.14\% \left(\frac{13.54}{67.23}\right)$   \\ \hline
ibmq\_belem S2    &      $14.03\% \left(\frac{9.1}{64.66}\right)$        &       $14.93\% \left(\frac{9.26}{62.1}\right)$            &     $8.9\% \left(\frac{6.38}{71.78}\right)$ & $24.38\% \left(\frac{19.6}{80.41}\right)$ & $47.41\% \left(\frac{22.79}{48.1}\right)$      \\ \hline
  ibmq\_lima S1  &      $31.66\% \left(\frac{26.26}{82.94}\right)$       &      $9.94\% \left(\frac{11.11}{111.66}\right)$          &      $26\% \left(\frac{27.54}{105.85}\right)$  & $12.04\% \left(\frac{12.81}{106.33}\right)$  & $11.8\% \left(\frac{2.52}{21.36}\right)$   \\ \hline
  ibmq\_lima S2  &      $13.4\% \left(\frac{13.9}{111.25}\right)$       &       $11.97\% \left(\frac{12.93}{108.06}\right)$          &    $16.27\% \left(\frac{9.31}{57.27}\right)$ & $19.9\% \left(\frac{18.7}{93.91}\right)$ & $5.14\% \left(\frac{1.1}{21.52}\right)$ \\ \hline
  ibmq\_santiago S1  &     $14.97\% \left(\frac{21.03}{140.48}\right)$     &     $21.32\% \left(\frac{15.67}{73.47}\right)$           &  $22.67\% \left(\frac{32.85}{144.93}\right)$ &  $11.79\% \left(\frac{13.91}{117.98}\right)$ & $17.36\% \left(\frac{21.96}{126.56}\right)$\\ \hline
  ibmq\_bogota S1  &     $9.69\% \left(\frac{9.71}{100.1}\right)$     &       $12.91\% \left(\frac{17.2}{133.15}\right)$           &     $12.64\% \left(\frac{21.14}{167.21}\right)$  & $11.95\% \left(\frac{21.26}{177.76}\right)$ & $10.9\% \left(\frac{15.64}{143.42}\right)$ \\ \hline
\end{tabular}
\label{tab:cvsExperiments}
\end{table*}

Figures \ref{fig:quitoS1}-\ref{fig:bogotaS1} show the results of the relaxation time measurements of the qubits of the IBM quantum processors we have considered. The estimated mean relaxation times, as well as the estimated standard deviations and coefficients of variation($c_\mathrm{v}=\sigma/\mu$) are provided in Table \ref{tab:cvsExperiments}. It can be seen that the fluctuations exhibited by the relaxation times of the qubits of the systems are considerable, ranging from coefficients of variation of approximately $5\%$ up to even $47\%$. It is worth noting that the error bars (we actually do not plot these for the sake of clarity) are not beyond $5\%$ of the estimated data, indicating that the fluctuations are actually relevant and are not related to errors that may have arisen due to the fitting of the data to the relaxation decay curves \cite{fluctErrorBars}. We must also mention that the circuits we have built are also affected by gate and SPAM (state preparation and measurement) errors, which are unavoidable in current quantum hardware. However, our experiments contain a small number of gates ($5$ $\mathrm{X}$ Pauli gates, which present errors in the order of $10^{-4}$ \cite{IBMqexp}) and post-decay measurements (which present errors in the order of $10^{-2}$), and the contributions of these error sources are suppressed by the multiple executions we are doing for each delay of each of the experiment ($4000$ shots for each delay). Consequently, following the rationale in \cite{experiments}, we can say that even if errors of this type are present in the data, their impact on the results will be unimportant when enough shots are run.

As stated previously, the results obtained for the fluctuations of $T_1$ for each of the considered scenarios are presented in Figures \ref{fig:quitoS1}-\ref{fig:bogotaS1}. As expected, it can be observed that the fluctuations are notable for most of the studied cases. In fact, there are qubits in particular scenarios that exhibit sharp transitions in the level where the relaxation time fluctuates (See Figures \ref{fig:quitoS2}a, \ref{fig:belemS1}e and \ref{fig:belemS2}e). In these cases, the qubits manifest a step-like transition at a given point in time, at which point the relaxation time fluctuates around a different ``mean'' level. Notice how for scenario \ref{fig:belemS1}e, the relaxation time exhibits this effect before going back to its original level. The reason behind these sudden changes is unclear, but speculation regarding this topic is possible. Sharp drops in the relaxation times of qubits have previously been observed due to the absorption of high energy particle impacts by superconducting qubits \cite{gamma1,gamma2,gamma3}. However, such events result in correlated errors over the array of superconducting qubits (global drop of the $T_1$ of those qubits), an effect that has not been observed in our scenarios. We consider that further experimental study of these types of events is necessary to fully understand what causes such sharp drops of the relaxation times and their subsequent return to the original level.

We finish this discussion by analyzing the dephasing times of the superconducting qubits. To obtain values for the Ramsey dephasing times, $T_2^*$, we need to employ a similar procedure to the one used to measure the relaxation time, albeit with a circuit that includes a Hadamard gate (necessary to obtain a $|+\rangle$ state), the variable delay, and another Hadamard gate applied prior to the measurement itself. If there were no decoherence, then the obtained result would always be the $|0\rangle$ state (Hadamard gates are unitary), but dephasing (recall that $T_2^*$ is a combination of both relaxation and pure dephasing) will increase the probability of measuring the $|1\rangle$ state. This probability of measuring $1$ will have an exponentially decaying cosine shape \cite{experiments}, and then the obtained results can be fitted to such a function obtaining the Ramsey dephasing time. However, this is a significantly more nuanced experiment than the one used to measure the relaxation time, and measuring Ramsey times in IBM quantum processors has been shown to be unrealiable \cite{experiments}. Consequently, we will not conduct experiments to measure the Ramsey times. In any case, fluctuations of the dephasing time have also been experimentally observed \cite{decoherenceBenchmarking,fluctDeph} and assuming that they are local to each of the qubits seems reasonable (part of the dephasing phenomenon is caused by relaxation, which in itself is local to each qubit).

\begin{figure*}[!ht]
\centering
\includegraphics[width=\linewidth]{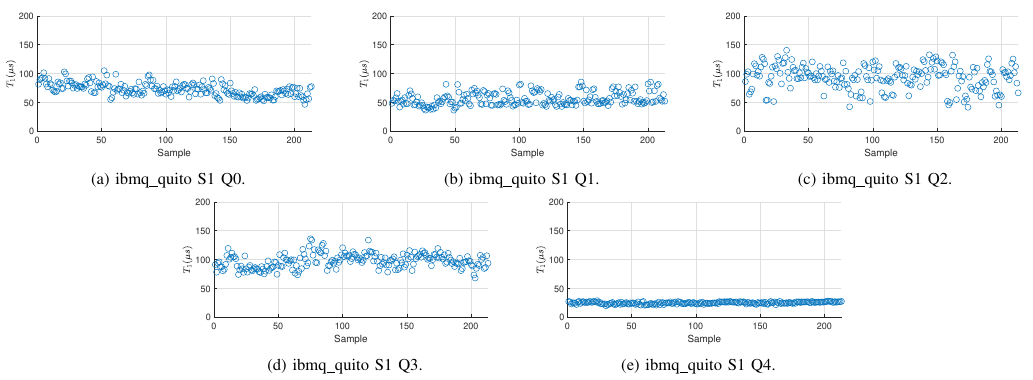}
\caption{$T_1$ measurements of the qubits of ibmq\_quito S1.}
\label{fig:quitoS1}
\end{figure*}
\begin{figure*}[!ht]
\centering
\includegraphics[width=\linewidth]{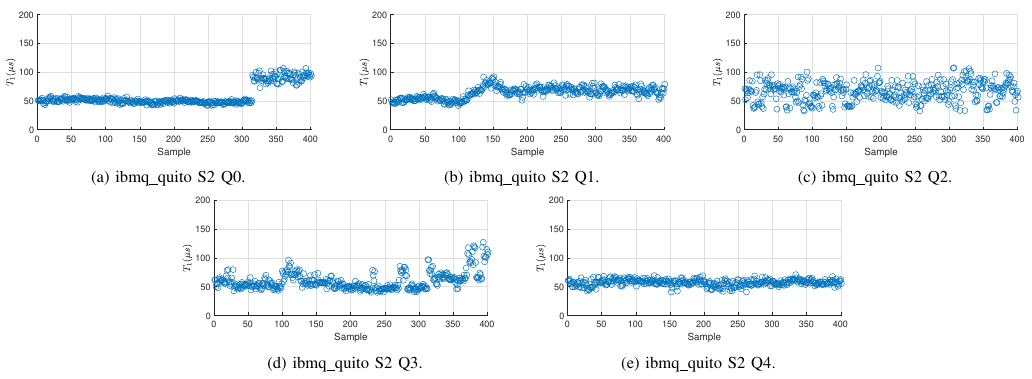}
\caption{$T_1$ measurements of the qubits of ibmq\_quito S2.}
\label{fig:quitoS2}
\end{figure*}
\begin{figure*}[!ht]
\centering
\includegraphics[width=\linewidth]{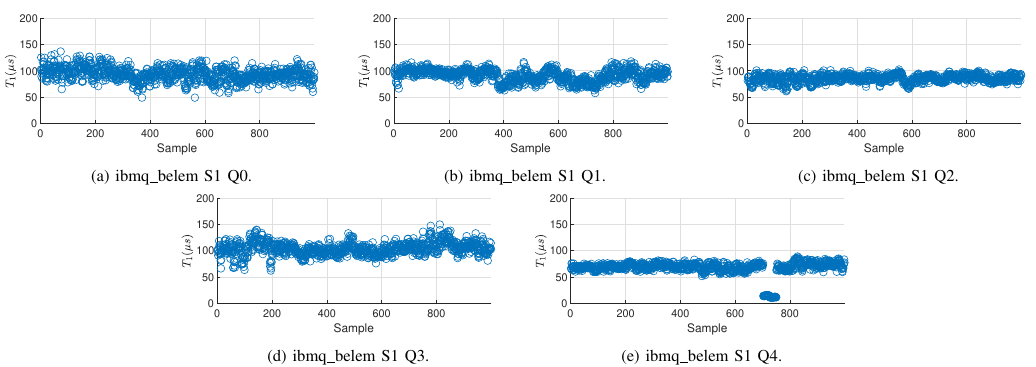}
\caption{$T_1$ measurements of the qubits of ibmq\_belem S1.}
\label{fig:belemS1}
\end{figure*}
\begin{figure*}[!ht]
\centering
\includegraphics[width=\linewidth]{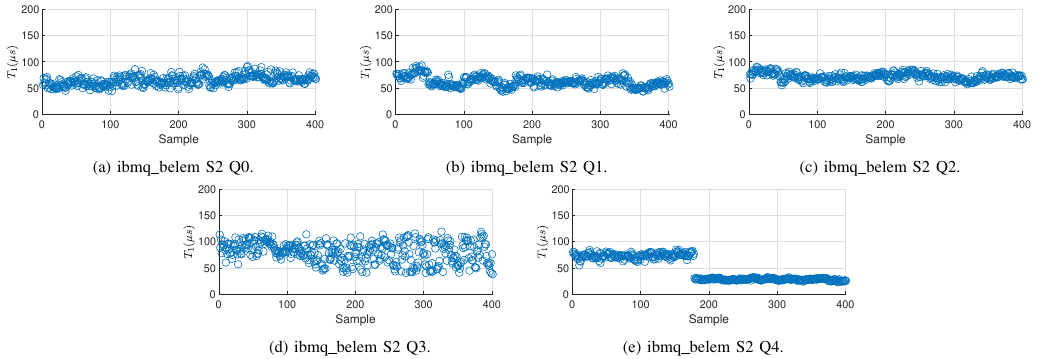}
\caption{$T_1$ measurements of the qubits of ibmq\_belem S2.}
\label{fig:belemS2}
\end{figure*}
\begin{figure*}[!ht]
\centering
\includegraphics[width=\linewidth]{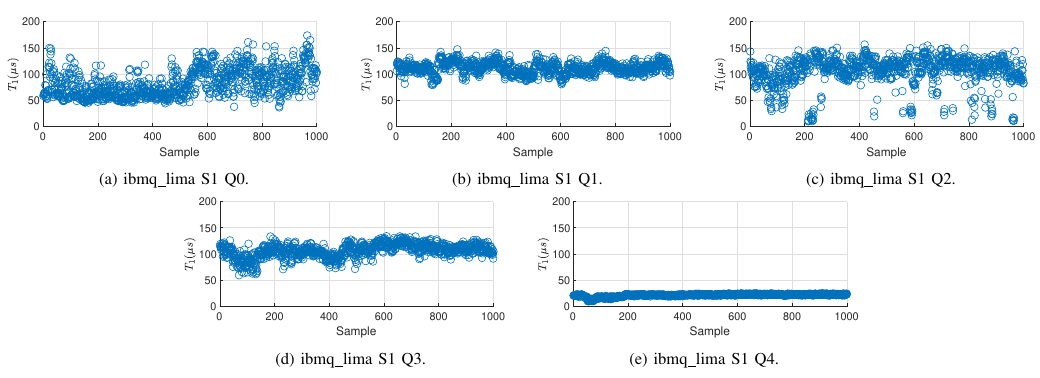}
\caption{$T_1$ measurements of the qubits of ibmq\_lima S1.}
\label{fig:limaS1}
\end{figure*}
\begin{figure*}[!ht]
\centering
\includegraphics[width=\linewidth]{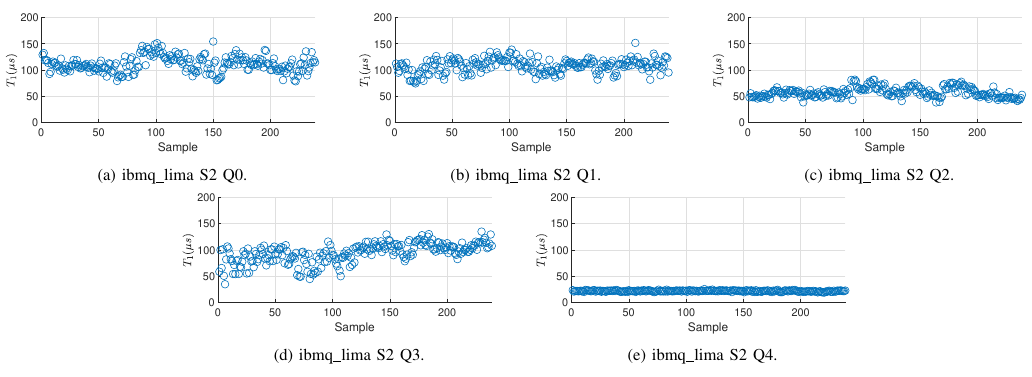}
\caption{$T_1$ measurements of the qubits of ibmq\_lima S2.}
\label{fig:limaS2}
\end{figure*}
\begin{figure*}[!ht]
\centering
\includegraphics[width=\linewidth]{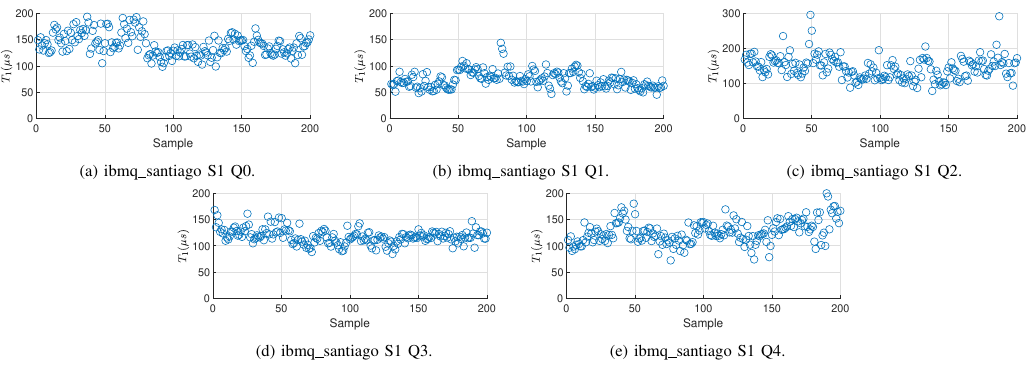}
\caption{$T_1$ measurements of the qubits of ibmq\_santiago S1.}
\label{fig:santiagoS1}
\end{figure*}
\begin{figure*}[!ht]
\centering
\includegraphics[width=\linewidth]{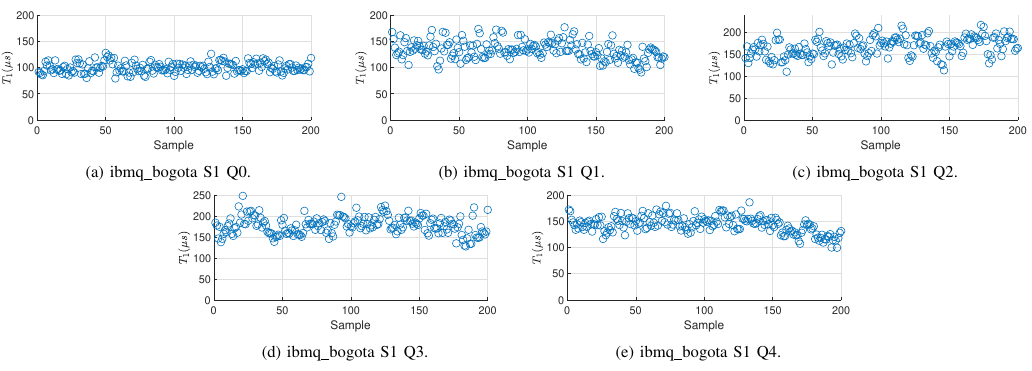}
\caption{$T_1$ measurements of the qubits of ibmq\_bogota S1.}
\label{fig:bogotaS1}
\end{figure*}

\end{document}